\pdfminorversion=5
\pdfoutput=1
\documentclass[a4paper,10pt]{article}
\usepackage{amsfonts,amsmath,amssymb,amstext,adjustbox}
\usepackage{mathrsfs}

\usepackage[matha,mathb]{mathabx}
\usepackage{mathtools}

\usepackage{ifthen} %
\usepackage{xifthen} %

\makeatletter
\newcommand*{\da@rightarrow}{\mathchar"0\hexnumber@\symAMSa 4B }
\newcommand*{\da@leftarrow}{\mathchar"0\hexnumber@\symAMSa 4C }
\newcommand*{\xdashrightarrow}[2][]{%
  \mathrel{%
    \mathpalette{\da@xarrow{#1}{#2}{}\da@rightarrow{\,}{}}{}%
  }%
}
\newcommand{\xdashleftarrow}[2][]{%
  \mathrel{%
    \mathpalette{\da@xarrow{#1}{#2}\da@leftarrow{}{}{\,}}{}%
  }%
}
\newcommand*{\da@xarrow}[7]{%
  \sbox0{$\ifx#7\scriptstyle\scriptscriptstyle\else\scriptstyle\fi#5#1#6\m@th$}%
  \sbox2{$\ifx#7\scriptstyle\scriptscriptstyle\else\scriptstyle\fi#5#2#6\m@th$}%
  \sbox4{$#7\dabar@\m@th$}%
  \dimen@=\wd0 %
  \ifdim\wd2 >\dimen@
    \dimen@=\wd2 %
  \fi
  \count@=2 %
  \def\da@bars{\dabar@\dabar@}%
  \@whiledim\count@\wd4<\dimen@\do{%
    \advance\count@\@ne
    \expandafter\def\expandafter\da@bars\expandafter{%
      \da@bars
      \dabar@
    }%
  }%
  \mathrel{#3}%
  \mathrel{%
    \mathop{\da@bars}\limits
    \ifx\\#1\\%
    \else
      _{\copy0}%
    \fi
    \ifx\\#2\\%
    \else
      ^{\copy2}%
    \fi
  }%
  \mathrel{#4}%
}
\makeatother

\newcommand{\parleftarrow}[2][]{%
  \ifthenelse{\isempty{#1}}%
  {\ensuremath{\xdashleftarrow{\ensuremath{#2}}}}%
  {\ensuremath{\xdashleftarrow{\parbox{#1}{\centering\ensuremath{#2}}}}}%
}
\newcommand{\parrightarrow}[2][]{%
  \ifthenelse{\isempty{#1}}%
  {\ensuremath{\xdashrightarrow{\ensuremath{#2}}}}%
  {\ensuremath{\xdashrightarrow{\parbox{#1}{\centering\ensuremath{#2}}}}}%
}

\usepackage{fancybox}
\newcommand{\plrightarrow}[2][]{%
  \ifthenelse{\isempty{#1}}%
  {\ensuremath{\xrightarrow{\ensuremath{#2}}}}%
  {\ensuremath{\xrightarrow{\parbox{#1}{\centering\ensuremath{#2}}}}}%
}
\newcommand{\plleftarrow}[2][]{%
  \ifthenelse{\isempty{#1}}%
  {\ensuremath{\xlefttarrow{\ensuremath{#2}}}}%
  {\ensuremath{\xleftarrow{\parbox{#1}{\centering\ensuremath{#2}}}}}%
}
\newlength{\commlength}
\setlength{\commlength}{70pt}
\newcommand{\pleftarrow}[1]{\parleftarrow[\commlength]{#1}}
\newcommand{\prightarrow}[1]{\parrightarrow[\commlength]{#1}}

\makeatletter
\newcommand{\incircbin}{\mathpalette\@incircbin}
\newcommand{\@incircbin}[2]{\mathbin{\ooalign{\hidewidth$#1#2$\hidewidth\crcr$#1\ovoid$}}}
\newcommand{\oeq}{\incircbin{=}}
\makeatother

\usepackage{amsthm}
\newtheorem{theorem}{Theorem}
\newtheorem{lemma}[theorem]{Lemma}
\newtheorem{remark}[theorem]{Remark}

\newtheorem{definition}[theorem]{Definition}

\usepackage{fullpage}
\setcounter{tocdepth}{2} %

\usepackage{thm-restate,apptools}

\newenvironment{restatablebackref}[4]{\restatable[\IfAppendix{\normalfont{}From
    page~\pageref{#3}}{\normalfont{}A proof is given in \cref{#4}}]{#1}{#2}\label{#3}}{\endrestatable}

\usepackage{amsfonts,adjustbox}
\usepackage{enumitem}

\usepackage{tabularx,multirow,booktabs}
\usepackage{arydshln} %

\newcommand{\Z}{\ensuremath{\mathbb{Z}}\xspace}
\newcommand{\N}{\ensuremath{\mathbb{N}}\xspace}

\newcommand{\GG}{\ensuremath{\mathbb{G}}\xspace}
\newcommand{\tmod}{\ensuremath{\,\mathrm{mod}\,}}
\newcommand{\bigSomething}[2]{\ensuremath{#1\mathopen{}\left(#2\right)\mathclose{}}\xspace{}}
\newcommand{\bigO}[1]{\bigSomething{\mathcal{O}}{#1}}
\newcommand{\bigOsqrt}[1]{\ensuremath{\mathcal{O}(\sqrt{#1})\xspace{}}}
\newcommand{\smallo}[1]{\bigSomething{o}{#1}}

\newcommand{\random}{\stackrel{\$}{\leftarrow}}
\newcommand{\checks}[1]{\ensuremath{\mathrel{\stackrel{?}{#1}}}\xspace}
\newcommand{\Primes}{\ensuremath{{\mathbb P}}\xspace}

\newcommand{\R}{\ensuremath{{\mathcal R}}\xspace}

\newcommand{\pk}{\ensuremath{pk}\xspace}
\newcommand{\sk}{\ensuremath{sk}\xspace}

\usepackage{pifont}
\usepackage{savesym}
\savesymbol{checkmark}
\newcommand{\gcheck}{{\color{darkgreen}\checkmark}}
\usepackage{dingbat}
\newcommand{\xyes}{\gcheck}
\newcommand{\xno}{\color{darkred}\ding{55}}

\newcommand{\vect}[1]{\ensuremath{{#1}}}
\newcommand{\matr}[1]{\ensuremath{{#1}}}
\newcommand{\uu}{\vect{u}}
\def\vv{\vect{v}} %
\newcommand{\xx}{\vect{x}}
\newcommand{\yy}{\vect{y}}

\newcommand{\WW}{\matr{W}}
\newcommand{\svec}{\vect{\gamma}}
\newcommand{\MM}{\matr{M}}

\newcommand{\AAA}{\matr{A}}
\newcommand{\BB}{\matr{B}}
\newcommand{\ww}{\vect{w}}
\newcommand{\client}{\ensuremath{\mathcal{C}}}
\newcommand{\verifier}{\ensuremath{\mathcal{V}}}
\newcommand{\server}{\ensuremath{\mathcal{S}}}
\newcommand{\clstate}{\ensuremath{st_\client}\xspace}
\newcommand{\servstate}{\ensuremath{st_\server}\xspace}
\newcommand{\verstate}{\ensuremath{st_\verifier}\xspace}
\newcommand{\ttb}[1]{\ensuremath{\text{\bf\texttt{#1}}}}

\newcommand{\negl}{\mathsf{negl}}
\newcommand{\Proba}[1]{\ensuremath{{\mathcal{P}r}\left[#1\right]}}
\newcommand{\Init}{\ttb{Init}\xspace}
\newcommand{\Setup}{\ttb{Setup}\xspace}
\newcommand{\SetupSet}{\ensuremath{\mathscr{S}}}
\newcommand{\deltaSetup}{\ttb{\ensuremath{\delta}Setup}\xspace}
\newcommand{\Read}{\ttb{Read}\xspace}
\newcommand{\Verif}{\ttb{Verify}\xspace}
\newcommand{\Update}{\ttb{Update}\xspace}
\newcommand{\Write}{\ttb{Write}\xspace}
\newcommand{\deltaUpdate}{\ttb{\ensuremath{\delta}Update}\xspace}
\newcommand{\Eval}{\ttb{Eval}\xspace}
\newcommand{\Audit}{\ttb{Audit}\xspace}
\newcommand{\accept}{\ttb{accept}\xspace}
\newcommand{\reject}{\ttb{reject}\xspace}
\newcommand{\adversary}{\ensuremath{\mathcal{A}}}
\newcommand{\observer}{\ensuremath{\mathcal{O}}}

\newcommand{\gensym}{\ensuremath{e(g;g)}}
\newcommand{\gen}{\ensuremath{e(g_1;g_2)}}
\newcommand{\convertintopoly}{Vect{\bf 2}Poly}

\newenvironment{smatrix}{\left(\begin{smallmatrix}}{\end{smallmatrix}\right)}

\newcommand{\Transpose}[1]{{{\matr{#1}}^{\intercal}}\xspace}

\newcommand{\narrowfont}[1]{\scalebox{.75}[1.0]{\textbf{\footnotesize{#1}}}}

\newcommand{\mtrootfrompath}{\narrowfont{MTpathRoot}\xspace}
\newcommand{\mtrootfromleaves}{\narrowfont{MTRoot}\xspace}
\newcommand{\mtcreate}{\narrowfont{MTTree}\xspace}
\newcommand{\mtuncles}{\narrowfont{MTUncles}\xspace}
\newcommand{\mtupdateleaf}{\narrowfont{MTupdLeaf}\xspace}

\newcommand{\compsec}{\ensuremath{\kappa}}

\date{\today{}}
\title{{VESPo}:
  Verified Evaluation of Secret Polynomials
  (with~application~to~dynamic~proofs~of~retrievability)}

\newcommand{\email}[1]{\href{mailto:#1}{\nolinkurl{#1}}}
\newcommand{\emails}[2]{\href{mailto:#2}{\nolinkurl{#1}}}

\author{{Jean-Guillaume Dumas}%
\footnote{%
  {Universit\'e Grenoble Alpes},
  {Laboratoire Jean Kuntzmann, UMR CNRS 5224, Grenoble INP}.
  {700 avenue centrale, IMAG --- CS 40700},
  {38058 Grenoble},
  {France}.
\emails{{Jean-Guillaume.Dumas,Aude.Maignan,Clement.Pernet}@univ-grenoble-alpes.fr}{Jean-Guillaume.Dumas@univ-grenoble-alpes.fr,Aude.Maignan@univ-grenoble-alpes.fr,Clement.Pernet@univ-grenoble-alpes.fr}.
}
\and{Aude Maignan}\footnotemark[2]
\and{Cl\'ement Pernet}\footnotemark[2]
\and{Daniel S.\ Roche}
\footnote{%
	{United States Naval Academy},
	{Annapolis},
	{Maryland},
	{United States}.
\email{Roche@usna.edu}.
}
}

\newcommand{\myqed}{}

\usepackage{xcolor,svgcolor}
\usepackage{hyperref}
\makeatletter
\hypersetup{
pdftitle={{VESPo}: Verified Evaluation of Secret Polynomials},
pdfauthor={Jean-Guillaume Dumas, Aude Maignan, Cl\'ement Pernet, Daniel S.\ Roche},
breaklinks=true,
plainpages=true,
colorlinks=true,
 linkcolor=darkblue,
 citecolor=darkgreen,
 urlcolor=darkred,
}
\makeatother
\usepackage[all]{hypcap} %

\usepackage[capitalise,noabbrev,nameinlink]{cleveref}

\usepackage{algorithm}
\usepackage{algcompatible}
\newcommand{\TO}{\textbf{to}\xspace}
\newcommand{\algorithmicreturn}{\textbf{return}}
\newcommand{\RETURN}{\STATE\algorithmicreturn{}\xspace}
\newcommand{\CRETURN}[1]{\STATE #1: \algorithmicreturn{}\xspace}

\usepackage{xspace}
\algnewcommand{\IfThen}[2]{%
  \State \algorithmicif\ #1\ \algorithmicthen\ #2}
\algnewcommand{\IfThenElse}[3]{%
  \State \algorithmicif\ #1\ \algorithmicthen\ #2\ \algorithmicelse\ #3}
\algblock{ParFor}{EndParFor}
\algnewcommand\algorithmicparfor{\textbf{parfor}}
\algnewcommand\algorithmicpardo{\textbf{do}}
\algnewcommand\algorithmicendparfor{\textbf{end\ parfor}}
\algrenewtext{ParFor}[1]{\algorithmicparfor\ #1\ \algorithmicpardo}
\algrenewtext{EndParFor}{\algorithmicendparfor}
\makeatletter
\newcounter{algorithmicH}%
\let\oldalgorithmic\algorithmic
\renewcommand{\algorithmic}{%
  \stepcounter{algorithmicH}%
  \oldalgorithmic}%
\renewcommand{\theHALG@line}{ALG@line.\thealgorithmicH.\arabic{ALG@line}}
\makeatother

\pagestyle{plain}
\begin{document}
\maketitle

\begin{abstract}
Proofs of Retrievability are protocols which allow a Client to store data remotely and to efficiently ensure, via audits, that the entirety of that data is still intact. Dynamic Proofs of Retrievability (DPoR) also support efficient retrieval and update of any small portion of the data.
We propose a novel protocol for arbitrary outsourced data storage that achieves both low remote storage size and audit complexity.
A key ingredient, that can be also of intrinsic interest, reduces to efficiently evaluating a secret polynomial at given public points, when the (encrypted) polynomial is stored on an untrusted Server.
The Server performs the evaluations and also returns associated certificates. A Client can check that the evaluations are correct using the certificates and some pre-computed keys, more efficiently than re-evaluating the polynomial.
Our protocols support two important features: the polynomial itself can be encrypted on the Server, and it can be dynamically updated by changing individual coefficients cheaply without redoing the entire setup.
Our methods rely on linearly homomorphic encryption and pairings, and our implementation shows good performance for polynomial evaluations with millions of coefficients, and efficient DPoR with terabytes of data.
For instance, for a 1TB database, compared to the state of art, we can reduce the Client storage by 5000x, communication size by 20x, and client-side audit time by 2x, at the cost of one order of magnitude increase in server-side audit time.
\end{abstract}

\section{Introduction}
With a constant growth in the amount of produced data, it becomes more
and more important to use remote facilities to store this data.
Users and organizations using such outsourcing need to ensure the
\emph{integrity} of their data.

In this setting, a Client wishes to store her data on an untrusted
Server, then verify (without full retrieval) that the Server still
stores the data intact. The crucial protocol is an $\Audit$, wherein the
Client issues some challenge to the Server, then verifies the response
using some pre-computed information to prove that the original data is
still recoverable in its entirety.
This is the field of \textbf{Proofs of Retrievability}~(PoR),
somewhat overlapping with the problem of
\emph{Provable Data
  Possession}~(PDP)~\cite{juels2007pors,ateniese2007provable}.

A variety of tools have been employed to develop efficient PoR and
PDP protocols, see for
instance~\cite{juels2007pors,ateniese2007provable,SachamPOR08,CashPOR13,Shi:2013:orampor,Anthoine:hal-02875379}
and references therein.
Retrievability is proven when
any sequence of successful audits can, with high probability, be used
to recover the original data, e.g., by polynomial interpolation; thus
any Server with a good chance to pass a single random audit must hold the
entire data intact.
Note that this recovery mechanism is not actually crucial except to
\emph{prove} the soundness of the audit protocol; the important feature
is how cheaply the audits can be performed by a Server and
resource-constrained Client.

Some of these protocols are based on verifiable computing, so that a
PoR audit consists of some verified computation over the stored
data~\cite{Gennaro:2010:outsourcing}. Generally speaking, verifiable
computing consists in
delegating the computation of a function to an untrusted Server.
This Server returns the result as well as a proof of its correctness,
and verifying a result should be less expensive than computing it
directly.
While certified and verified
computation protocols date back decades, the practical need for
efficient methods is especially evident in cloud computing,
wherein again a low-powered device,
such as a mobile phone, may wish to outsource expensive and critical
computations to an untrusted, shared-resource, commercial cloud. %
The literature on verifiable computation protocols can be
divided into general-purpose computations --- of an
arbitrary algebraic circuit --- and more limited but
more efficient special-purpose computations of certain
functions (see, e.g.,~\cite{Walfish:2015:VCWRT,jgd:2018:outsourcing}
and references therein).
In the latter category, one problem is Verifiable
Polynomial Evaluation (VPE), where a Client wishes to outsource the
evaluation of a univariate polynomial $P$ on an untrusted Server at
given public points and efficiently verify the result.

Existing VPE protocols usually do not consider dynamicity, at least
not efficiently: even for the modification of a single coefficient,
most of the time the whole protocol has to be reinitialized.
Also, previous PoR protocols would either have a low audit complexity
but a storage size several times that of the database; or have
low remote storage but a less scalable audit complexity.
In this paper, we propose a novel protocol for arbitrary outsourced
dynamic data storage that achieves both low remote storage size and audit
complexity.
A key ingredient of our protocol is to be able to perform a
dynamic VPE, in order to efficiently handle updates of the database.

A verifiable polynomial evaluation scheme is conventionally
composed of three main algorithms.
First, a Client runs $\Setup(P)$ to compute some public
representation of the (potentially secret) polynomial $P$ (which may
be stored on the Server) as well as
some private information which will be used to verify later evaluations.
This step may be somewhat expensive, but only needs to be performed
once.
The second algorithm, $\Eval(x)$, is run by the
Server using a public evaluation point $x$ provided by the Client. The
Server produces the evaluation $y = P(x)$ as well as some proof (or
certificate) $\Pi$ that this evaluation is correct.
Finally, the third algorithm, $\Verif(y,\Pi)$, is run
by the Client to check the correctness of the evaluation. This
verification should be \emph{always correct}
and \emph{probabilistically sound}, meaning that an honest Server can
always produce a result $y$ and proof $\Pi$ that will pass the
verification, whereas an incorrect evaluation $y$ will fail the
verification with high probability for any purported proof $\Pi$. Furthermore, the
\Verif{}~algorithm should be efficient, ideally much cheaper in time
and/or space than the computation itself.

In the simplest case, the considered polynomial $P$ is static and stored
in cleartext by both the Server and the Client.
But constraints can then be added to this framework, when needed:
\begin{itemize}[leftmargin=2\labelsep]
\item
{\bf Polynomial outsourcing}.
When the Client device has limited storage, or to facilitate
evaluations for multiple Clients, both the polynomial storage and its
computation must be externalized.
Besides evaluation and verification, an additional $\Read$ protocol
is often provided to allow random access to some polynomial coefficients.
The challenge is for the Client to obtain the polynomial evaluation
while minimizing the communication costs required to verify it.
\item
{\bf Secret polynomial}.
To guarantee data privacy, the polynomial could be
hidden from the Server, or the Client, or both.
Typically, the polynomial will be
stored under a fully- or partially-homomorphic encryption scheme, in
such a way that the Server can still compute the (necessarily encrypted)
evaluation and certificate for verification.
This setting has been extensively studied in
the literature, with both general-purpose protocols as well as some
specific ones for verified polynomial evaluation; see, e.g.,~\cite{Fiore:2014:CCS,Groth:2016:size,BenSasson:2018:RS-IOP,Bunz:2018:bulletproof,Maller:2019:sonic,Fiore:2020:bvced,Bhadauria:2020:ligero,Lee:2021:tccDory,Rafols:2021:uuSNARK,Bois:2021:verifencrypt}.
\item
{\bf Dynamic updates}.
The initial $\Setup$ protocol requires knowledge of the entire
polynomial and generally is much
more costly than running $\Verif$.
This creates a challenge when the Client wishes to update only a few of
the coefficients of the polynomial.
A \emph{dynamic} VPE protocol allows for such updates efficiently.
Namely, the Client and Server storing polynomial $P=\sum_{i=0}^dp_iX^i$ for verified
evaluation can engage in an additional $\Update(i,p'_i)$ protocol,
which effectively updates $P(x)$ to $P(x) + (p'_i-p_i)x^i$ for future
evaluations.
To the best of our knowledge, no prior work in the literature discusses
dynamic updates for verified polynomial evaluation.
When the polynomial (as well as any update)
needs to be hidden from the Server,
the difficulty is in general to preserve both secrecy and verifiability while
allowing those efficient partial updates.
The importance of allowing efficient updates is motivated by our
application to verifiable data storage, where a Client outsourcing
storage of a large database wishes to make small changes efficiently.
\item
{\bf Private/public verification}.
The verification protocol is said to be \emph{private} when only the party
which holds the secrets derived during $\Setup$ can verify
evaluations. That is, any potential Verifiers (sometimes called
\emph{readers}) must be trusted not to divulge secret information to the
untrusted Server.
Sometimes, it is desirable also to have untrusted Verifiers,
who can check the result of an evaluation without knowing any secrets.
In this \emph{public verification} setting, the Client at setup time
publishes some additional information, distributed reliably but
insecurely to any Verifiers, which may be used to check
evaluations and proofs issued by the Server.
\end{itemize}

\subsection{Our contributions}
Our contributions are the following:
\newcounter{myenum}
\begin{itemize}[leftmargin=2\labelsep]
\item An (unencrypted) Verifiable Polynomial
  Evaluation (VPE) scheme with public verification, supporting
  \emph{secured dynamic updates}%
  ~ (\cref{sec:dynamic,protoDynClear}).
  The polynomial is stored in cleartext on the Server, and the technique
  used to provide a correct and sound protocol uses both Merkle trees
  and pairings. A Horner-like evaluation scheme is used to optimize the
  evaluation of the difference polynomial for the proof, and no
  secrets are required to perform the verification.
\item A novel \emph{encrypted, dynamic and private} VPE protocol
    (\cref{sec:full,proto:full}). That is, the
  polynomial is stored encrypted on the Server, and efficient updates
  to individual coefficients can be performed.
  This is achieved by combining a linearly homomorphic cryptosystem with
  techniques from the first scheme.
  Note however, this
  scheme does not support public verification as this verification now
  requires some secrets from the Client.
\item A new Dynamic Proofs of Retrievability (DPoR) scheme that is the first to
  simultaneously support small Server
  storage, dynamic updates, and efficient audits
  (\cref{sec:por,protoPor}), based on
  our novel encrypted, dynamic VPE protocol.
  Previous work either had poly-logarithmic time
  audits and linear extra storage, or sub-linear extra storage and
  polynomial-time audits; ours is the
  first to achieve both sub-linear extra storage and optimal $\bigO{\log n}$
  Client time for updates and audits.
  This could be beneficial especially in blockchain settings such as
  FileCoin where the proof and verification must be done
  on-chain~\cite{ProtocolLabs:2017:Filecoin}.
\item A full implementation and experimental timings based on our
  encrypted VPE and dynamic PoR protocols that indicate VPE up to
  millions of coefficients and DPoR up to terabytes of data, both with
  Client cost less than a few milliseconds
  (\cref{tab:lintests,table:results}).
\end{itemize}

These contributions are organized as follows.
A complete security definition of verifiable polynomial evaluation
can be found in \cref{sec:secu}.
This definition follows previous results, with the novel inclusion of
an $\Update$ protocol.
Then \cref{sec:cipher} introduces the tools for verification
of polynomial evaluation.
A motivating example is presented in the form of a direct extension of
the bilinear pairing scheme of~\cite{Kate:2010:KZG},
now supporting an encrypted input polynomial (\cref{sec:cipher,proto:cKZG}).
Since the privacy of this protocol is not proven and it
supports neither public verifiability nor dynamic updates, it
motivates the more involved contributions of \cref{sec:dynamic} (for
public verifiability and dynamicity, but on an unciphered polynomial)
and of \cref{sec:full} (for dynamicity on a ciphered polynomial, but
without public verifiability).

The efficiency of our protocols is measured by the computational
complexity of the Server-side $\Eval$ algorithm, the volume of
persistent Client storage, and the amount of communication and
Client-side complexity to perform a $\Verif$.
Improving on previously-known results, our VPEs protocols all have
\bigO{d} (parallelizable) Server-side computation, \bigO{1}
communication and Client-side computation time, and \bigO{1}
Client-side persistent storage.
We include some practical timings in
\cref{ssec:full-exper,ssec:por-exper,app:parprefix}.
In addition, our new dynamic proofs of retrievability require only
\smallo{d} extra Server space. This improves on \cite{Shi:2013:orampor}
in terms of Server storage and on \cite{Anthoine:hal-02875379} in
terms of communication and Client computation complexity for
$\Audit$.
For instance on a 1TB size database, with a Server extra storage lower
than $0.08$\%, and a Client persistent storage less than one KB,
our Client can check in less than $7$ms that their entire
outsourced data is fully recoverable from the cloud Server.

\subsection{Related work}

While ours is the first work we are aware of which considers verifiable
polynomial computation while hiding the polynomial from the Server and
allowing efficient dynamic updates, there have been a number of prior
works on different settings of the VPE problem.

One line of work considers \emph{commitment schemes} for polynomial
evaluation
\cite{Catalano:2013:vectorcommit,Camenisch:2015:asiacrypt,Libert:2016:icalp,Gabizon:2019:PLONK,Tomescu:2020:aggregatable,Boneh:2020:multiplepoints,Ozdemir:2020:usenix,Fiore:2020:bvced,Lee:2021:tccDory}.
There, the polynomial $P$ is known to the Server, who publishes a
binding commitment. The Verifier then
confirms that a given evaluation is consistent with the pre-published
commitment.
By contrast, our
protocols aim to \emph{hide the polynomial $P$ from the Server}.

Another line of work considers polynomial evaluation as an encrypted
function, which can be evaluated at any chosen point. Function-hiding
inner product encryption (IPE) \cite{BishopIPE15,KimIPE18,AbdallaIPE20}
can be used to perform polynomial evaluation without revealing the
polynomial $P$, but this inherently requires linear-time for the Client,
who must compute the first $d$ powers of the desired evaluation point
$x$.
Similarly, protocols using
a Private Polynomial Evaluation (PPE) scheme have been developed in
\cite{10.1007/978-3-319-68637-0_29}. This primitive, based on an ElGamal
scheme, ensures that the polynomial is protected and that the user is
able to verify the result given by the Server. Here the aim of the
protocol is not to outsource the polynomial evaluation, but to
obtain $P(x)$ and a proof without knowing anything about the polynomial.
To check the proof, as with IPE the Client has to perform a computation which is linear in the degree of $P$.

A third and more general approach which can be applied to the VPE
problem is that of secure evaluation of arithmetic circuits.
These protocols make use of fully homomorphic encryption (FHE) to
outsource the evaluation of an arbitrary arithmetic circuit without
revealing the circuit itself to the Server.
The VC Scheme of  \cite{Gennaro:2010:outsourcing}
is based on Yao's label construction. $P$ is first transformed into an
arithmetic circuit. The circuit is garbled once in a setup phase and
sent to the Server.
To later perform a verified evaluation, the Client sends an encryption
of $x$, the Server computes $P(x)$ through the garbled circuit, and
the Client can verify the result in time proportional to the circuit
depth, which for us is \bigO{\log d}.

Using similar techniques,
Fiore et al. and Elkhiyaoui et
al.
\cite{Benabbas2011VerifiableDO,Fiore:2012:PVD,Elkhiyaoui2016EfficientTF}
propose high-degree verified polynomial evaluations.
The major issue for these works is that they were not meant to be
dynamic: they use some structured masking that must be updated
together with the polynomial update (otherwise updates leak some
secrets). But then the update is not efficient anymore as the
structure impacts all of the polynomial coefficient masking.

More recently, Fiore et
al. \cite{Fiore:2014:CCS,Fiore:2020:bvced,Bois:2021:verifencrypt}
propose a new protocol for more general circuits, using succinct
non-interactive arguments of knowledge (SNARKs) or probabilistically
checkable proofs (PCPs) over a quotient polynomial ring.
In contrast to our work, these protocols use more expensive
cryptographic primitives, and they do not consider the possibility of
efficiently updating the polynomial -- while preserving the security
properties.
A summary of how our protocols compare to the state of the art is
given in~\cref{tab:compare}.
\begin{table}[htbp]\small
\renewcommand{\arraystretch}{0.75}\setlength{\tabcolsep}{2pt}
\caption{Comparing verifiable computation schemes for polynomial
  evaluation of degree $d$. See also~\cite[Table~1]{Maller:2019:sonic} or
  \cite[Table~1]{Rafols:2021:uuSNARK} {\footnotesize (Most of the time
    dynamicity is not considered in the literature)}.}\label{tab:compare}
\begin{tabular}{lcccccc}
\toprule
Protocol & Server & Comm. & Verif. & Dyn. & LHE & P-Q \\
\midrule
BGV11~\cite{Benabbas2011VerifiableDO}& \bigO{d} & \bigO{1} & \bigO{1} & \xno & \xyes & \xno\\
FG12~\cite{Fiore:2012:PVD} & \bigO{d} & \bigO{1} & \bigO{1} & \xno & \xno & \xno\\
libsnark~\cite{Groth:2016:size} &  \bigO{d\log{d}} & \bigO{1} & \bigO{1} &
\xno & MT & \xno \\
bulletproof~\cite{Bunz:2018:bulletproof} &  \bigO{d} &
\bigO{\log{d}} &\bigO{d}  & \xno & \xyes & \xno \\
FGP14~\cite{Fiore:2014:CCS} &  \bigO{d\log{d}} &  \bigO{1} & \bigO{\log{d}}
&\xno & FHE & \xno\\
libiop~\cite{BenSasson:2018:RS-IOP}  &  \bigO{d\log{d}}& \bigO{\log^2{d}}  & \bigO{d}
&\xno & \xno & \xyes \\
ligero++~\cite{Bhadauria:2020:ligero} & \bigO{d\log{d}}& \bigO{\log^2{d}}  & \bigO{d}
&\xno & \xno & \xyes \\
FNP20~\cite{Fiore:2020:bvced} &  \bigO{d\log{d}}  &  \bigO{1} &
\bigO{\log{d}} &\xno & FHE & \xno \\
DORY~\cite{Lee:2021:tccDory}  &   \bigO{d\log{d}} &  \bigO{\log{d}} &
\bigO{\log{d}} & \xno & \xyes & \xno \\
BCFK21~\cite{Bois:2021:verifencrypt}&  \bigO{d} &  \bigO{\log^{2}{d}} &
\bigO{\log^{2}{d}}  & \xno & FHE & \xyes\\
VESPo, \cref{proto:full}  & \bigO{d} &  \bigO{1} & \bigO{1} & \xyes & \xyes & \xno\\
\bottomrule
\end{tabular}
\end{table}

From this table, we see that many instances, like SNARKS, need
$O(d\log{d})$ operations on the Server side, where VESPo remains
linear, $O(d)$ in the input size.
Also dynamicity is usually not considered in the literature.
For us, the difficulty is to be able to modify a small part of
the input, without having to replay the whole \Setup~phase, while not
compromising security.
A salient point is that many schemes cannot directly handle the
\emph{encrypted} setting. In some cases a solution could be to simulate the
whole encryption as arithmetic circuits, but this drastically affects
performance.
For instance we tried libsnark over a Paillier encryption, this rapidly
exhausted the RAM of our server (i.e. even with degrees as small as
20), and thus denote this exhaustion by MT (memory thrashing).
Finally, we mention if the protocol is feasibly post-quantum (P-Q) secure
in the 'P-Q' column (\cite{BenSasson:2018:RS-IOP,Bhadauria:2020:ligero}
are P-Q-secure, \cite{Bois:2021:verifencrypt} do not mention it but
seems P-Q-secure, all the others, including us, use bilinear pairings).
In~\cref{app:pqc}, we abstract the requirements of our protocols to see
if they could be modified to use only post-quantum secure routines. Our
preliminary results there show that this might be possible but that using
quantum-safe routines in our case would still be several orders of
magnitude slower.

In fact, efficiency, linearity, dynamicity and encryption, are all
four of paramount importance for instance for our particular
application, as detailed next.

Proof of retrievability (PoR) and Provable data possession (PDP)
protocols also have an extensive
literature~\cite{ateniese2007provable,Erway,juels2007pors,Sebe:2008:EfficientRD,SachamPOR08,stefanov2012iris,CashPOR13,Shi:2013:orampor,Cash:2017:DPR,Anthoine:hal-02875379}.
PDPs, first introduced by {Ateniese et al.}, generally optimize Server
storage and efficiency at the cost of soundness:
a PDP audit only guarantees (probabilistically) that a
\emph{large fraction} of the data was not altered; a single block
deletion or alteration is likely to go undetected in an audit.

PoRs have stronger soundness guarantees,
but at the expense of larger and more complicated Server storage,
often based on erasure codes and/or ORAM techniques.

PoR methods based on block erasure encoding are a class of methods
which guarantee with a high probability that  the client's entire data
can be retrieved.  %
The idea is to check the authenticity of a number of erasure encoding
blocks during the data recovery step but also during the audit
algorithm. Those approaches will not detect a small amount of corrupted
data. But the idea is that if there are very few corrupted blocks,
they could be easily recovered via the error correcting
code~\cite{Lavauzelle:2016:ldcpor}.
Now, state-of-the-art, dynamic, PoR protocols either incur a
constant-factor blowup in Server storage with poly-logarithmic audit
cost \cite{CashPOR13,Cash:2017:DPR,Shi:2013:orampor},
or use negligible extra Server
storage space but require polynomial-time audits on the Client and
Server \cite{SachamPOR08,Anthoine:hal-02875379}.
We refer, e.g., to \cite[\S~7]{Anthoine:hal-02875379} for a more
detailed comparison between PoR and PDP schemes.
In fact, a lower bound argument from
\cite[Theorem~4]{Anthoine:hal-02875379} proves that some time/space
tradeoff is inherent. Roughly speaking, for any PoR on an $N$-bit
database, the product of persistent storage overhead times audit
computational complexity must be at least $N$.
We show in~\cref{sec:por} that with VESPo, we let the Server perform
most of the computations (but this remains quite fast), so that we
still need only negligible extra Server storage, but drastically
reduce the Client communication, storage and computations.

\section{Security properties and assumptions}\label{sec:secu}
\subsection{Preliminaries}\label{ssec:prelim}
\paragraph{Pairings.} In the following, we use the notation
$e:\GG_1{\times}\GG_2\rightarrow\GG_T$ to denote a
bilinear pairing in groups of the same prime order.
If such a pairing exists then $\GG_1$ and $\GG_2$ are denoted as
bilinear groups.
We often use groups of prime order, in order to be compute within the exponents.
In particular, thanks to the homomorphic property
of exponentiation, we will perform some linear algebra over the group
and need notations for this.
For a matrix $\AAA$, $g^\AAA$ denotes the coefficient-wise
exponentiation of a generator $g$ to each entry in $\AAA$.
Similarly, for a matrix $\WW$ of group elements and a matrix $\BB$ of
scalars, $\WW^\BB$ denotes the extension of matrix multiplication using
the group action. If we have $\WW=g^\AAA$, then $\WW^\BB =
(g^\AAA)^\BB$.
Further, this quantity can actually be computed if needed by working in the
exponents first, i.e., it is equal to $g^{(\AAA\BB)}$. For example:
\begin{equation}\label{eq:intheexponent}
  \left(g^{\begin{smatrix}a&b\\c&d\end{smatrix}}\right)^{\begin{smatrix}e\\f\end{smatrix}}
  =
  \begin{smatrix}g^a&g^b\\g^c&g^d\end{smatrix}^{\begin{smatrix}e\\f\end{smatrix}}
  =
  \begin{smatrix}g^{ae+bf}\\g^{ce+df}\end{smatrix}
  =
  g^{\left(\begin{smatrix}a&b\\c&d\end{smatrix}\begin{smatrix}e\\f\end{smatrix}\right)}.
\end{equation}

For the sake of simplicity, when there is no ambiguity, we also use
the associated notation shortcuts like:
\(e(g_1^{\begin{smatrix}a\\b\end{smatrix}};g_2^c)=e(g_1;g_2)^{\begin{smatrix}ca\\cb\end{smatrix}}\).

\paragraph{Linearly Homomorphic Encryption (LHE).}
We will also use a public-key partially homomorphic encryption scheme
where both addition and multiplication are considered.
We need the following properties on the linearly homomorphic
encryption function $E$ (according to the context, we use $E_{\pk}$ or
just $E$ to denote the encryption function, similarly for the
decryption function, $D$ or $D_{\sk}$): computing several homomorphic
additions on ciphered messages and homomorphic multiplications but only
between a ciphered message and a cleartext.
\begin{equation}\label{eq:homo:addmul}
 D(E(m_1)E(m_2))= m_1 + m_2
\quad\text{\textbf{AND}}\quad
D(E(m_1)^{m_2}) =  m_1m_2
\end{equation}

\begin{remark}\label[remark]{rk:paillierdp}
For instance, Paillier-like
cryptosystems~\cite{Paillier:1999:homomorph,Benaloh94,Fousse:2011:benaloh}
can satisfy these requirements, \emph{via} multiplication in the
ground ring, for addition of enciphered messages,
and \emph{via} exponentiation for ciphered multiplication.

Note though that an implementation with Paillier cryptosystem of the
evaluation $P(r)$, in a modular ring $\Z_m$, providing the
functionalities of \Cref{eq:homo:addmul}, requires some care:
indeed these equations are usually satisfied modulo an RSA composite
number~$N$, not equal to~$m$.
More precisely, Paillier cryptosystem will provide
$D(E(P(r)))\equiv(\sum_{i=0}^{d}p_i r^i)\mod{N}$. Thus a possibility to
recover the correct value, is to precompute $r^i\mod{m}$ and require
that:
\(
(d+1)(m-1)^2 < N
\).
This way one can actually homomorphically compute over $\Z$ and
use the modulo $m$ only after decryption. See \cref{app:paillier} for
more details.
\end{remark}

\paragraph{Merkle Hash Trees.}
Finally, we will use a Merkle hash tree to allow verifications of
updates.
A Merkle hash tree is a tree in which every leaf is labelled with the
cryptographic hash of a data block, and every other node is labelled
with the hash of the labels of its child
nodes~\cite{Merkle,rfc6962,Anthoine:hal-02875379}.
By just storing the root of the tree, one can check the presence of
a given leaf in the tree with only a logarithmic number of additional
nodes (\emph{uncles}) and hash computations. More details on how we
use them are given in~\cref{ssec:merkle}.

\subsection{Verifiable scheme}\label{ssec:vdpe}
A verifiable dynamic polynomial evaluation (VDPE) scheme consists of five
algorithms: \Setup, \Read, \Update, \Eval and \Verif between a Client
\client{} with state \clstate{}, a Server \server{} with state
\servstate{} and a Verifier \verifier{} with (potentially public)
state \verstate. The algorithms can \reject, when specified, if some
inconsistencies are detected.
\begin{itemize}[leftmargin=2\labelsep]
\item $(\clstate,\verstate,\servstate)\gets\Setup(1^\compsec,P)$:
  On input of the security parameter $\compsec$ and the polynomial
  $P$ of degree $d^{\circ}(P)=d$,
  outputs the Client state $\clstate$, the Verifier \verstate{} and
  the Server state $\servstate$.
  We denote by $\SetupSet_{\compsec}(P)=\{\Setup(1^\compsec,P)\}$ the
  set of admissible states for a given polynomial (dependent on the
  different random choices).
\item $\{p_i,\reject\}\gets\Read(i,\verstate,\servstate)$: On input of an index $i\in{}0..d$, the
  Verifier/Server states \verstate/\servstate, outputs the $i^{th}$
  coefficient of $P$ or \reject.
\item
  $\{(\clstate',\verstate',\servstate'),\reject\}\gets\Update(i,p'_i,\clstate,\verstate,\servstate)$:
  On input of an index $i\in{}0..d$, data $p'_i$, the
  Client/Verifier/Server states \clstate/\verstate/\servstate, outputs
  new Client/Verifier/Server states \clstate'/ \verstate'/ \servstate',
  representing the polynomial $P + (p'_i-p_i) X^i$, or \reject.
  A variant of this algorithm, $\deltaUpdate$, takes as input
  the difference data $\delta=p'_i-p_i$ instead of $p'_i$.
\item $\{\zeta, \bar{\xi} \}\gets\Eval(\servstate,r)$:
  On input of the Server state $\servstate$
  and an evaluation point $r$, outputs $\zeta$ the encrypted value of
  $P(r)$ and a proof $\bar{\xi}$.
\item $\{z,\reject\}\gets\Verif(\verstate,r, \zeta, \bar{\xi})$:
  On input of the Verifier state $\verstate$, the evaluation point
  $r$, the encrypted value $\zeta$ of $P(r)$ and the proof
  $\bar{\xi}$, outputs a successful evaluation \(z=P(r)\) or \(\reject\).
\end{itemize}
The Client may use random coins for any algorithm.
This is the general setting for \emph{public verification}, the idea
being that for a \emph{private verification}, the Client will play the
role of the Verifier too and their states will be identical:
$\verstate=\clstate$.
\subsection{Security properties}\label{ssec:secprop}
Adapted from both~\cite{Kate:2010:KZG,Fiore:2014:CCS}, in order to
take into account dynamicity, we propose the following security game
and the associated security properties:
\begin{figure}[htbp]%
\caption{VDPE soundness security game between two
    Observers $\observer_1$ \& $\observer_2$ (respectively playing the
    roles of the Client and the Verifier), a potentially {\em
      malicious} Server $\adversary$ and an honest Server
    $\server$}\label{game:sound}
\fbox{
\begin{minipage}{.95\columnwidth}
\begin{enumerate}[leftmargin=3\labelsep]
  \item $\adversary$ chooses an initial polynomial $P$.
  \item\label{game:init} $\observer_1$ runs \Setup{},
    keeps $\clstate$ and sends the initial Server part,
    $\servstate$, of the memory layout to both~$\adversary$ and~$\server$;
    and the Verifier part, $\verstate$, to $\observer_2$.
  \item\label{game:poly} For a polynomial number of steps
    $t = 1, 2, . . . , poly(\compsec)$,
    $\adversary$ picks an operation $op_t$ where operation $op_t$ is either
    \Update{}, \Read, \Eval or \Verif. $\observer_1$ executes the
    \Update{} operations with both~$\adversary$ and~$\server$, while
    $\observer_2$ executes the \Read, \Eval{} or \Verif operations
    also with both~$\adversary$ and~$\server$.
  \item $\adversary$ is said to win the game, if
    any cleartext sent by~$\adversary$ differs from that of~$\server$,
    or, if any ciphered message sent by~$\adversary$ does not
    deciphers like that of~$\server$,
    and neither~$\observer_1$ nor~$\observer_2$ did witness \reject{}.
  \end{enumerate}
\end{minipage}
}
\end{figure}

\begin{definition}\label[definition]{def:secVDPE}
  (\Setup, \Read, \Update, \Eval, \Verif) is a secure publicly
  verifiable polynomial evaluation scheme if it satisfies the
  following three properties:
\begin{enumerate}[leftmargin=2\labelsep,label={(\roman*)}]
\item {\bf Correctness.} Let \(d\in\N\),
    \(P(X)=\sum_{i=0}^d p_iX^i\), with \((p_0,\ldots,p_d)\) in a ring
    \R and \((\clstate,\verstate,\servstate)\in\SetupSet_{\compsec}(P)\),
    then for any \(0\leq{i}\leq{d}\) and \(p'_i \in\R\):
\begin{align}
\Read(i,\verstate,\servstate)&=p_i\\
\Update(i,p'_i, \clstate,\verstate,\servstate)&\in\SetupSet_{\compsec}(P+(p'_i-p_i)X^i) \\
\Verif(\verstate,r,\Eval(\servstate,r))&=P(r)
\end{align}
\item {\bf  Soundness.}
    The soundness requirement stipulates that the Client (or the Verifier)
    always \reject (except with negligible probability) if any message sent
    by the Server deviates from the honest (correct)
    behavior\footnote{One might also ask for \emph{knowledge
        soundness} (see, e.g.,~\cite{Lindell:2003:stpc}) on the
      coefficients of the outsourced polynomial, but this is easily
      achieved for any VDPE scheme by definition: the Client can
      simply interpolate from the evaluations.}.
    A VDPE scheme is \emph{sound}, if no polynomial-time
    adversary has more
    than negligible probability in winning the security game
    of~\cref{game:sound}.
\item {\bf  Privacy.} We use now the following variant game:
$\adversary$ chooses two initial polynomials $P_0$, $P_1$;
$\observer_1$ randomly chooses one bit $b\in\{0,1\}$; the players run
steps~(\ref{game:init}-\ref{game:poly}) of~\cref{game:sound} on
$P_b$.
A VDPE scheme is {\em private}, if no polynomial-time adversary has
more than negligible probability in obtaining $b$.
\end{enumerate}
\end{definition}

\begin{definition}\label[definition]{def:secPrivVDPE} (\Setup, \Read, \Update, \Eval, \Verif) is a
  secure \emph{privately} verifiable polynomial evaluation scheme if
  it verifies the
  \textbf{Correctness}, \textbf{Soundness} and \textbf{Privacy}
  requirements of~\cref{def:secVDPE},
  where the Verifier
  state \verstate is included in the Client state \clstate and no
  polynomial-time adversary~$\adversary$ has more than negligible
  probability in winning either the soundness or the privacy security
  games when~$\observer_1$ also plays the role of~$\observer_2$.
\end{definition}

In \cref{sec:full} we apply our new verifiable protocols
to create a new Dynamic Proofs of Retrievability (DPoR)
scheme, provably achieving correctness, soundness, and retrievability
for DPoR. We follow the exact same security definition for DPoR as in
\cite{Anthoine:hal-02875379}, adapted from~\cite{Shi:2013:orampor},
which we will not restate here for the sake of brevity.

\subsection{Assumptions}\label{ssec:assumptions}
To prove the security of our protocols we rely on classical discrete
logarithm and Diffie-Hellman like assumptions, all related to
polynomial computations.
The first assumption, a decisional one, is the distinct leading
monomials assumption: informally it states that
polynomial evaluations ``in the exponents'' where the polynomials have
distinct leading monomials are merely indistinguishable from
randomness. The formal version is recalled in~\cref{def:DLM}.
We also need computational assumptions, including the hardness to
compute discrete logarithms, in~\cref{def:DLOG}, and polynomial extensions
of the hardness to produce Diffie-Hellman-like
secrets even with bilinear pairings, in~\cref{def:tBSDH}.

\newcommand{\citekatzdef}{\cite[Def.~9.63]{Katz:2020:introduction}}
\begin{definition}[Discrete Logarithm, \textbf{DLOG}, hardness
  assumption~\citekatzdef]\label[definition]{def:DLOG}
For a computational security parameter $\compsec\in\N$, a
discrete-logarithm problem is hard relatively to a group \GG of group
order $p\geq{2^{2\compsec}}$, a
generator $g$ and a randomly sampled element of the group,
$h\random{\GG}$, if for
any probabilistic polynomial-time (ppt) algorithms ${\mathcal{A}}_{DLOG}$, there
exists a negligible function $\negl$ such that
\(\Proba{\mathcal{A}_{DLOG}(\GG,g,h)=x\text{~s.t.~}h=g^x}\leq\negl(\compsec)\).
\end{definition}

\begin{definition}[t-Bilinear Strong Diffie-Hellman, \textbf{t-BSDH},
  assumption, from~\cite{Goyal:2007:tBSDH,Kate:2010:KZG}]\label[definition]{def:tBSDH}
For a computational security parameter $\compsec\in\N$,
let $\alpha\in\Z_p^*$, with $p\geq{2^{2\compsec}}$, and $j\in\{1,2\}$. Given as input a $(t + 1)$-tuple
$\left\langle{}g_j,g_j^\alpha,g_j^{\alpha^2},\ldots,g_j^{\alpha^t}\right\rangle\in\GG_j^{t+1}$,
in a bilinear group $\GG_j$ of order $p$ with a bilinear pairing
$e:\GG_1{\times}\GG_2\rightarrow\GG_T$, for every ppt-adversary
$\mathcal{A}_{t-BSDH}$ and for any value of
$c\in\Z_p\backslash\{-\alpha\}$, we have the probability:
{\small\[\Proba{\mathcal{A}_{t-BSDH}(g_1,g_2,g_j^\alpha,g_j^{\alpha^2},\ldots,g_j^{\alpha^t})=\left\langle{}c,\gen^{\frac{1}{\alpha+c}}\right\rangle}\leq\negl(\compsec)\]}
\end{definition}

Next is the distinct leading monomial (DLM) assumption that states
that polynomial evaluations
``in the exponents'' where the polynomials have distinct leading
monomials are merely indistinguishable from randomness.
In \cite{Abdalla:2015:PRF} the assumption is given for $n$-multi\-variate
polynomials with matrices of dimension $k{\times}k$ and projections of
dimension $k{\times}m$ for $k\geq{2}$ and $m\geq{1}$. Here, we will
only use univariate polynomials, $n=1$, and dimensions $k=2$, $m=1$. We
therefore recall the assumption only for this particular case.

\newcommand{\citeabdallathm}{\cite[Theorem~6]{Abdalla:2015:PRF}}
\begin{definition}[Distinct Leading Monomial, \textbf{DLM},
  assumption~\citeabdallathm]\label[definition]{def:DLM}
Let $\GG =\langle{g}\rangle$ be a bilinear group of prime order $p$. The
advantage of an adversary $\mathcal{A}$ against the $(2,1,d)$-DLM
security of $\GG$, denoted
$\text{Adv}_\GG^{(2,1,d)-\text{DLM}}(\mathcal{A})$,
is the probability of success in the game defined in
\cref{game:DLM} and is negligible, with $\mathcal{A}$ being restricted to
make queries $P\in{Z_p[T]}$ such that for any challenge
$P$, the maximum degree in one indeterminate in $P$ is at most $d$,
and for any sequence $(P_1,\ldots,P_q)$ of queries, there exists an
invertible matrix $M\in\Z_p^{q{\times}q}$ such that the leading
monomials of $M\cdot{}\Transpose{[P_1,\ldots,P_q]}$ are distinct.
\begin{table}[!ht]\centering\caption{$(2,1,d)$-DLM
    security game for a bilinear group
    $\GG$~\cite{Abdalla:2015:PRF}}\label{game:DLM}
\setlength{\tabcolsep}{8pt}\renewcommand{\arraystretch}{1.5}
\begin{tabular}{lll}
\toprule
\multicolumn{1}{c}{Init} & \multicolumn{1}{c}{Challenge$(P)$} & \multicolumn{1}{c}{Response$(b')$}\\
\midrule
$\vect{r}\random\Z_p^{2{\times}2}$ & \textbf{If} $b==0$ & \multirow{3}{*}{\emph{Return} $b'==b$}\\
$\beta\random\Z_p^{2}$ & \textbf{Then} \emph{Return} $y\gets{g^{P(\vect{r})\cdot\beta}}$&\\
$b\random\{0,1\}$ & \textbf{Else} \emph{Return} $y\random\GG^{2}$&\\
\bottomrule
\end{tabular}
\end{table}
\end{definition}

In fact, the DLM security can also be reduced to the Matrix Diffie-Hellman
assumption (MDDH)~\cite[Theorem~5]{Abdalla:2015:PRF}, a generalization of the
widely used decision linear
assumption~\cite{Gay:2016:eurocrypt,Morillo:2016:KMDDH,Agrawal:2016:crypto,Ambrona:2017:crypto,Attrapadung:2020:asiacrypt}.

Eventually, when we use Merkle Hash Tree, we need to apply a
\emph{Collision Resistant Hash Function} (\textbf{CRHF}), so that
finding different hash trees with the same root is hard.

Overall, since we consider the semantic security of the
cryptosystem, we assume that adversaries are probabilistic polynomial
time machines. More precisely we consider \textbf{Malicious
  adversaries}: a corrupted Server controls the network and
stops, forges or listens to messages in order to gain information or
fool the Client.

\section{Tools for the verification of a polynomial evaluation}\label{sec:cipher}

Our first step is to define a verification protocol for
polynomial evaluation that supports a ciphered input polynomial over a
finite ring $\Z_p$.
For this we start with ideas mostly
from~\cite{Kate:2010:KZG,Benabbas2011VerifiableDO}, in order to
highlight the difficulties in our setting: adding dynamicity and
encryption; that is allowing to modify only parts of the inputs at a
low cost, while dealing with covert inputs and preserving a proven
security.
More precisely, our modifications allow the adaptation of the security
proof in order to incorporate the updates, and require some
algorithmic tricks to preserve the linearity of the Server
computations.

First, we define a difference
polynomial that we will use to check consistency.

\begin{definition}
For a polynomial $P(X)\in\Z_p[X]=\sum_{i=0}^d p_i X^i$ of degree $d$,
let its \emph{subset polynomials} be: $T_{k,P}(X)=\sum_{i=k+1}^d
p_{i} X^{i-k-1}=\sum_{j=0}^{d-1-k} p_{j+k-1} X^{j}$.
\end{definition}
\begin{lemma}\label[lemma]{prop:bivariate}
Let $Q_P(Y,X)=\frac{P(Y)-P(X)}{Y-X}$ be the \emph{difference
  polynomial} of a polynomial $P$;
then:
\begin{equation}\label{eq:diffpoly}
Q_P(Y,X)=\sum_{i=1}^d p_i \sum_{k=0}^{i-1} Y^{i-k-1}X^k=\sum_{k=0}^{d-1} T_{k,P}(Y) X^k\end{equation}
\end{lemma}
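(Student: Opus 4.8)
The plan is to establish the identity $Q_P(Y,X) = \sum_{i=1}^d p_i \sum_{k=0}^{i-1} Y^{i-k-1} X^k$ first, and then re-index the double sum to obtain the subset-polynomial form. The starting point is the well-known factorization $Y^i - X^i = (Y-X)\sum_{k=0}^{i-1} Y^{i-1-k} X^k$, valid in any commutative ring. I would verify this briefly by expanding the right-hand side as a telescoping sum: $(Y-X)\sum_{k=0}^{i-1} Y^{i-1-k}X^k = \sum_{k=0}^{i-1} Y^{i-k}X^k - \sum_{k=0}^{i-1} Y^{i-1-k}X^{k+1}$, and after shifting the index in the second sum the interior terms cancel, leaving $Y^i - X^i$.

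Next I would use linearity. Writing $P(Y) - P(X) = \sum_{i=0}^d p_i (Y^i - X^i)$, the $i=0$ term vanishes, so $P(Y)-P(X) = \sum_{i=1}^d p_i (Y^i - X^i) = (Y-X)\sum_{i=1}^d p_i \sum_{k=0}^{i-1} Y^{i-1-k}X^k$. Since $Y-X$ is a non-zero-divisor in $\Z_p[Y,X]$, dividing gives $Q_P(Y,X) = \sum_{i=1}^d p_i \sum_{k=0}^{i-1} Y^{i-1-k}X^k$, which is the first claimed expression (noting $i-k-1 = i-1-k$).

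For the second equality I would swap the order of summation. The pairs $(i,k)$ appearing satisfy $1 \le i \le d$ and $0 \le k \le i-1$; equivalently $0 \le k \le d-1$ and $k+1 \le i \le d$. Hence
\[
Q_P(Y,X) = \sum_{k=0}^{d-1} X^k \sum_{i=k+1}^d p_i Y^{i-k-1} = \sum_{k=0}^{d-1} T_{k,P}(Y)\, X^k,
\]
using the definition $T_{k,P}(Y) = \sum_{i=k+1}^d p_i Y^{i-k-1}$.

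The argument is entirely routine; there is no real obstacle. The only point requiring a modicum of care is the index bookkeeping in the re-indexing step (and making sure the degenerate cases $d=0$ or empty inner sums are handled, where both sides are simply $0$). I would also silently note, for cleanliness, that the stated definition of $T_{k,P}$ contains a typographical shift in its second form ($p_{j+k-1}$ versus $p_{j+k+1}$), but the first form is the one used and is correct, so the proof goes through as written.
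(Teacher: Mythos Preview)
Your proposal is correct and follows essentially the same approach as the paper's own proof: both start from the factorization $Y^i - X^i = (Y-X)\sum_{k=0}^{i-1} Y^{i-k-1}X^k$, apply linearity to obtain the first expression for $Q_P$, and then swap the order of summation to recover the subset-polynomial form. Your version is slightly more detailed (the telescoping verification, the non-zero-divisor remark, the explicit re-indexing), and your observation about the typo in the second form of $T_{k,P}$ is accurate, but the argument is the same.
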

\begin{proof}
As $Y^i-X^i=(Y-X)(\sum_{k=0}^{i-1} Y^{i-k-1}X^k)$, we obtain that
\(Q_P(Y,X)=\sum_{i=1}^d p_i \sum_{k=0}^{i-1} Y^{i-k-1}X^k\).
This is also
\(Q_P(Y,X)=\sum_{k=0}^{d-1} X^k\left(\sum_{i=k+1}^d p_iY^{i-k-1}\right).
\)
\myqed\end{proof}

This identity relates two evaluations of $P$:
$P(Y)=P(X)+(Y-X)Q_P(Y,X)$. This equation allows one to verify
$z\checks{=}P(r)$ by checking, for a secret $s$, that:
\begin{equation}\label{eq:KZG}
P(s)=z+(s-r)Q_P(s,r)
\end{equation}
\begin{table*}[!ht]\centering
  \caption{Verifiable Ciphered Polynomial
    Evaluation}\label{proto:cKZG}
\fbox{\adjustbox{max width=.975\textwidth}{\centering
    \begin{tabular}{cccc}
      & Server & Communications & Client \\
      \midrule
      \multirow{3}{*}{{\Setup}}&&$\GG_1,\GG_2,\GG_T$ groups of order
      $p$& $P\in\Z_p[X]$, $1\leq{}d^{\circ}(P)\leq{d}$; let $s\random\Z_p$ \\
      &&pairing $e$, gen. $g_1,g_2,g_T=\gen$& $W\gets{E_{\pk}(P)}$,
      $\mathcal{K} \gets g_T^{P(s)}$, $H\gets[g_2^{T_{k,P}(s)}]_{k=0}^{d-1}$\\
& {\bf{Output}} : $\servstate=\{\pk, \GG_{1,2,T},W, H\}$ &\pleftarrow{W,H}&  $\clstate=\{\pk,\sk,\GG_{1,2,T},g_{1,2,T},e,\mathcal{K},s\}$\\
      \midrule
      \multirow{2}{*}{\Eval{}/\Verif{}}& Form
      $x\gets\Transpose{[1,r,r^2,\ldots,r^d]}$ & \pleftarrow{r} & $r\random\Z_p$\\
      &$\zeta=\Transpose{W}\boxdot{x}$;
      $\xi=\Transpose{H}\odot{x_{0..d-1}}$& \prightarrow{\zeta,\xi} &$e(g_1^{s-r};\xi)g_T^{D_{\sk}(\zeta)}\checks{=}\mathcal{K}$\\
&&& {\bf{Output}} : $D_{\sk}(\zeta)$ or {\bf{reject}}\\
    \end{tabular}
}}
\end{table*}

For this, let $E,D$ be the encryption and decryption functions of a
partially homomorphic cryptosystem, supporting addition of two ciphertexts
and multiplication of ciphertext by a cleartext, as
in~\Cref{eq:homo:addmul}. Therefore it is possible to evaluate a
ciphered polynomial at a clear evaluation point, using powers of the
evaluation point:
for $x=[1,r,r^2,\ldots,r^d]$, denote by
$\Transpose{E(P)}\boxdot{x}=\prod_{i=0}^dE(p_i)^{r^i}=E(P(r))$, the
homomorphic polynomial evaluation.

Similarly, if $H=[h_0,\ldots,h_d]=[g^{a_0},\ldots,g^{a_d}]$,
denote by $H\odot{x}=\prod_{i=0}^dh_i^{x_i} = g^{\sum{a_ix_i}}$ the
dot-product in the exponents. Then~\cref{proto:cKZG} shows how the
Server produces the evaluation via the partially homomorphic cipher
and the subset polynomials (in this table, and in the following
protocols presentations, time passes from top to bottom only, driven
by the "Communications" column).
Then this evaluation is bound to be correct by the consistency
check in the exponents.

\begin{restatablebackref}%
{propositionS}{cKZGverifiable}{prop:cKZGverifiable}{app:proofs}
The protocol of~\cref{proto:cKZG} is correct and sound under the
$d$-BSDH assumption.
\end{restatablebackref}
Several issues remain with this protocol: first it is not dynamic.
Indeed, for a dynamic version, the problem is that updating only one
coefficient of $P$ requires to update up to $d-1$ coefficients
of~$H$. This work would be of the same order of magnitude as recomputing
the whole setup.
Second it is not fully hiding the coefficients of $P$ as they are
just put in the exponents without any masking, and we do not prove the
privacy requirement\footnote{Efficient updates in similar schemes are
  considered, e.g., in~\cite{Tomescu:2020:aggregatable} but to a protocol
  that verifies coefficients known to the Server, \emph{not} its
  evaluation at hidden coefficients}.
Third, the protocol is not fully publicly verifiable since the
decryption key of the partially homomorphic system is required.
We incrementally solve the first two issues in the remainder of this
paper and obtain a fully secure private protocol.
We also are able to provide a dynamic protocol, publicly verifiable,
but for an unciphered polynomial.
Combining all three properties, that is,
designing a publicly verifiable dynamic protocol for ciphered
polynomials, preserving a good efficiency while still being secure,
remains an open question to us
(usually when adapting a static protocol, either dynamicity involves
too much recomputation or the security is compromised by the updates).
\section{Outsourced dynamic verification of the
  evaluation}\label{sec:dynamic}
In order to be able to deal with updates, a classical tool is to add
Merkle trees that are updated along with the polynomial parts.
Checking the root of the Merkle tree allows for logarithmic
verifications and updates of any coefficient of the polynomial.
Modifications of the polynomial coefficients are also included in the
Client state so that old polynomials cannot be used for the
verification of \Eval.
The difficulty then is to preserve a linear time Server with a fast and
light Client; we show next how to achieve this.

\subsection{Merkle trees for logarithmic Client storage}\label{ssec:merkle}
In order to avoid storing the polynomial coefficients on the
Client side,
we thus use a Merkle hash tree~\cite{Merkle,rfc6962,Anthoine:hal-02875379}. %
Then it is sufficient to store
the root of the Merkle tree:
under the CRHF assumption, a malicious Server
cannot give back different polynomial coefficients.
For our purpose, an implementation of such trees must just provide
the following algorithms:

\begin{itemize}[leftmargin=2\labelsep]
\item $T\leftarrow\mtcreate(X)$ creates a Merkle hash tree from a
  database~$X$.
\item $r\leftarrow\mtrootfromleaves(X)$ computes from scratch the root
  of the Merkle hash tree of the whole database~$X$.
\item $L\leftarrow\mtuncles(i,T)$ retrieves a list of ``uncle'' node
  hashes along the path in the tree to index $i$.
\item $r\leftarrow\mtrootfrompath(i,a,L)$ computes the root of the
  Merkle hash tree from a leaf element~$a$ and the associated path of
  uncles~$L$.
\item $T'\leftarrow\mtupdateleaf(i,a,T)$ updates the whole Merkle
  tree~$T$ by changing the $i$-th leaf to be~$a$.
\end{itemize}

The correctness requirements are that,
for any index $i$ and databases $X,Y$ which are identical except possibly
for the $i$'th index index (i.e., $\forall j\ne i, x_j = y_j$), we have

\begin{eqnarray}
\mtrootfromleaves(X) &=&
  \mtrootfrompath\left(i, x_i, \mtuncles\left(i,\mtcreate(X)\right)\right)
  \label{eq:merkle}
  \\
  \mtcreate(X) &=& \mtupdateleaf(i,x_i,Y)
  \label{eq:mtupl}
\end{eqnarray}

And the soundness requirement is that no P.P.T.\ adversary can compute
a tuple $(X,i,b,L)$ such that
\begin{equation}
x_i \ne b \quad\text{and}\quad
\mtrootfromleaves(X) = \mtrootfrompath(i, b, L).
\label{eq:mtsound}
\end{equation}

\subsection{Public dynamic unciphered polynomial evaluation}
Thanks to these additional Merkle-tree operations, we can now give a
protocol for the public verification of the evaluation of a dynamic
polynomial $P$.
It consists in five algorithms (\Setup, \Read, \Update, \Eval, \Verif) detailed
in~\cref{protoDynClear} and it requires, for now, a \emph{symmetric} pairing.

\begin{table*}\centering\renewcommand{\arraystretch}{1.1}
  \caption{Public and  Dynamic unciphered polynomial evaluation}\label{protoDynClear}
  \adjustbox{max width=\textwidth}{\centering
\fbox{\begin{tabular}{cccc}
    & Server & Communications & Client/Verifier \\
    \midrule
    \multirow{4}{*}{{\Setup}}&&$\GG,\GG_T$ of order $p$, gen. $g$&
    $P\in\Z_p[X]$, $1\leq{}d^{\circ}(P)\leq{d}$; let $s\random\Z_p$ \\
    &&symm. pairing $e$& $\mathcal{K}_1\gets e\left(g^{{P}(s)};g\right)$,
    $\mathcal{K}_2 \gets g^s$, $S \gets [g^{s^k}]_{k=0}^{d-1}$\\
    &$T_P\gets\mtcreate(P)$ & \pleftarrow{P,S} & $r_P\gets \mtrootfromleaves (P)$\\
& {\bf{Output}} : $\servstate=\{\GG,P,T_P,S\}$ &&  $\verstate=\{\GG,\GG_T,g,e,\mathcal{K}_1,\mathcal{K}_2,r_P\}$,
$\clstate=\verstate\cup\{s\}$ \\
    \midrule
    \multirow{2}{*}{{\Read}} && \pleftarrow{i} &\\
    &$L_i\gets\mtuncles(i,P,T_P)$ &  \prightarrow{p_i,L_i}
    & $r_P\checks{=} \mtrootfrompath(i,p_i,L_i)$\\
&&& {\bf{Output}} : $p_i$ or {\bf{reject}}\\
    \midrule
    \multirow{4}{*}{{\Update}} && \pleftarrow{i,p'_i} &\\
    &$L_i\gets\mtuncles(i,P,T_P)$ &
    \prightarrow{p_i,L_i} & $r_P\checks{=}\mtrootfrompath(i,p_i,L_i)$,
    $r'_P\gets \mtrootfrompath(i,p'_i,L_i)$\\
    &  $T_P \gets \mtupdateleaf(i,p'_i,T_P)$; $p_i \gets p_i'$  &&
    $\mathcal{K}'_1\gets\mathcal{K}_1\cdot{e\left(g^{s^i(p'_i-p_i)};g\right)}$\\
& {\bf{Output}} : $\servstate=\{\GG,P,T_P,S\}$ &&  $\verstate=\{\GG,\GG_T,g,e,\mathcal{K'}_1,\mathcal{K}_2,r'_P\}$,
$\clstate=\verstate\cup\{s\}$ or {\bf{reject}}\\
    \midrule
    \multirow{2}{*}{\Eval{}/\Verif{}}& Form
    $x\gets\Transpose{[1,r,r^2,\ldots,r^d]}$ & \pleftarrow{r} &
    $r\random\Z_p$\\
    &$\zeta \gets P(r)$;
    ${\xi}\gets\prod_{i=1}^d\prod_{k=0}^{i-1}S_{i-k-1}^{p_ix_k}$&
    \prightarrow{\zeta,\xi} &
    $e(\xi;\mathcal{K}_2/g^r)e(g^{\zeta};g )\checks{=}\mathcal{K}_1$\\
&&& {\bf{Output}} : $D(\zeta)$ or {\bf{reject}}\\
  \end{tabular}
}}
\end{table*}

During the \Setup~algorithm, the Client sends the unciphered polynomial
to the Server and  deletes it  to minimize its storage. The
Client uses a random coin $s$ to create some data to be published or
to be sent to the Server.
The Verifier collects
the published data and is authorized to run the \Read~and the
\Verif~algorithms. But she is not authorized to run the \Setup~and
\Update~algorithms (she does not know $s$).
At any point the Client can take the role of a Verifier.
This is shown in~\cref{protoDynClear}, where the different states are
as follows:
$\verstate=\{\GG,\GG_T,g,e,\mathcal{K}_1,\mathcal{K}_2,r_P\}$,
$\clstate=\verstate\cup\{s\}$ and
$\servstate=\{\GG,g,P,T_P,S\}$.

\begin{restatablebackref}%
{propositionS}{csClear}{prop:csClear}{app:proofs}
The protocol of~\cref{protoDynClear}
  is correct and sound under the $d$-BSDH and CRHF assumptions.
\end{restatablebackref}
One difficulty is to preserve a linear-time Server. We
show next that this is indeed possible here.

\subsection{Efficient linear-time evaluation}\label{ssec:dcqComplex}
As a first approach to evaluate our protocols, we consider that the
cardinality of the coefficient domain is a constant. Therefore,
we count as arithmetic operations in the field not only the usual
addition, subtraction, multiplication and inversion, but also the
exponentiations that are independent of the degree of the polynomial.
We thus express our asymptotic complexity bounds
in~\cref{tab:PVDUeval}, only with respect to that degree $d$.
The main idea is to evaluate the polynomial of~\cref{eq:diffpoly}
(with a priori a quadratic number of monomials) in a Horner-like
fashion, so that it requires only a linear number of operations.
\begin{table}[!ht]\centering
\renewcommand*{\arraystretch}{1.25}
\caption{Complexity bounds for the publicly  verifiable dynamic and unciphered
  polynomial evaluation of~\cref{protoDynClear} for a degree $d$
  polynomial.}\label{tab:PVDUeval}
  \adjustbox{max width=\textwidth}{\centering
    \begin{tabular}{ccccc}
      \toprule
      \multicolumn{2}{c}{}& Server & Communication & Client \\
      \midrule
      \multicolumn{2}{c}{Storage} & \bigO{d} & & \bigO{1} \\
      \midrule
      \multirow{3}{*}{\rotatebox[origin=c]{90}{Comput.}}
      &\Setup	& \bigO{d}	& \bigO{d}	& \bigO{d} \\
      &\Read/\Update	& \bigO{\log(d)}	& \bigO{\log(d)}	& \bigO{\log(d)} \\
      &\Eval/\Verif{}& \bigO{d}	& \bigO{1}	& \bigO{1} \\
      \bottomrule
    \end{tabular}
  }
\end{table}
\begin{restatablebackref}{propositionS}{compClear}{prop:compClear}{app:proofs}
  In~\cref{protoDynClear},
  the setup protocol requires \bigO{d}
  arithmetic and hashing operations;
  the update protocol requires \bigO{log(d)}
  arithmetic and hashing operations;
  the verification protocol requires
  \bigO{1} communications and arithmetic operations for the Client,
  and \bigO{d} arithmetic operations for the Server.
\end{restatablebackref}
In the next Section, we then propose a novel fully private protocol,
combining and formalizing the ideas from the encrypted one and the
dynamic one.

\section{Fully private, dynamic and ciphered polynomial
  evaluation}\label{sec:full}
So far we have a polynomial evaluation verification, that allows
efficient updates of its
coefficients. We now propose a scheme which combines the polynomial
evaluation with the  externalization of  the polynomial itself. For
this, two more ingredients are added in~\cref{ssec:full}: an efficient
masking in the exponents in order to fulfill the hiding security
property and an outsourcing of the (ciphered) polynomial itself.
This latter feature allows the Client to not even store the
polynomial and reduces her need for permanent storage to a small constant
number of field elements. For this we use Merkle hash trees presented
in~\cref{ssec:merkle}. They ensure the authenticity of the coefficient
updates, with the storage of only one hash. Finally note that the
bilinear pairing need not be symmetric anymore, but need to be applied
twice for the security hypothesis to hold.

We start the section with the security tools and then the
linear algebra algorithms we will use and
then give a full formalization and the associated proofs of
security. We end the section with experiments showing the efficiency
of our approach.

\subsection{Security requirements}\label{ssec:full}
Here we add a secret masking of the polynomial coefficients in order to make
the protocol hiding.
For this we use the security hypothesis of~\cref{def:DLM}:
indeed, DLM security states that in a bilinear group $\GG$ of prime order, the
values $(g^{P_1(A)\vect{\beta}},\ldots,g^{P_d(A)\vect{\beta}})$ are indistinguishable from a
random tuple of the same size, when $P_1,\ldots,P_d$ have distinct
leading monomials of bounded degree and $A$ and $\vect{\beta}$
are the $2{\times}2$ and $2{\times}1$ secrets.
Therefore, in our modified protocol, the coefficients
$g^{\Phi^i\vect{\beta}}$ for a secret $2{\times}2$ matrix $\Phi$, are
indistinguishable from a random tuple ($g^{\Gamma_i}$) since the
polynomials $X^i$, $i=1..d$ are just distinct monomials.

\subsection{Linear algebra toolbox}
For the next protocol to hold, we need to adapt the difference polynomial
to the matrix case. For instance~\cref{prop:bivariate} holds in the
matrix case provided that the, now matrices, $Y$ and $X$ commute and
that $Y-X$ is invertible. Let $I_n$ be the $n{\times}n$ identity
matrix. Then, we will for instance use $Y=sI_2$ and $X=rI_2$ with $s\neq{r}$.

Also to speed-up things with the DLM masks, we need to efficiently
compute geometric sums of matrices.
Thanks to Fiduccia's algorithm~\cite{Fiduccia:1985:linrec},
this is easily done with a number of operations logarithmic in the
exponent, provided that $1$ is not an eigenvalue of the
matrix. Indeed, first, any matrix commutes with the identity so the
geometric sum can be computed via one matrix exponentiation, one
matrix inverse and one matrix multiplication:
\(
\sum_{i=0}^d A^i = (A^{d+1}-I_n)(A-I_n)^{-1}
\).
Then, second, Fiduccia's algorithm computes the exponentiation modulo
the characteristic polynomial, using the square and multiply fast
recursive algorithm. This is summarized
in~\cref{alg:fiduccia,alg:matgeomsum} and analyzed in~\cref{app:proofs}.

\begin{algorithm}[!ht]
  \caption{Degree $2$ modular monomial powers ($2$-MMP)}\label{alg:fiduccia}
  \begin{algorithmic}[1]
    \REQUIRE $d\in\Z$, $d\geq{1}$, $P=p_0+p_1Z+Z^2\in\Z_p[Z]$ monic degree $2$ polynomial.
    \ENSURE $Z^d\mod{P}$.
     \IfThen{$d==1$}{\algorithmicreturn{} $Z$}
     \STATE $T\gets{2\text{-MMP}(\lfloor{d/2}\rfloor,P)}$;
     \STATE $S\gets (t_0^2-t_1^2p_0)+(2t_0t_1-t_1^2p_1)Z$;
     \hfill\COMMENT{$T(Z)^2$ modulo $P(Z)$}
     \IF{$d$ is odd}
     \RETURN $(-s_1p_0)+(s_0-s_1p_1)Z$;
     \hfill\COMMENT{$Z\cdot{S(Z)}$  modulo $P(Z)$}
     \ELSE
     \RETURN $S$.
     \ENDIF
  \end{algorithmic}
\end{algorithm}

\begin{algorithm}[!ht]
  \caption{Projected matrix geometric sum (PMGS)}\label{alg:matgeomsum}
  \begin{algorithmic}[1]
    \REQUIRE $k\in\Z$,
    $A=\begin{smatrix}a&b\\c&d\end{smatrix}\in\Z_p^{2{\times}2}$,
    s.t. $A-I_2$ is invertible, $\vect{\beta}\in\Z_p^2$.
    \ENSURE $\sum_{i=0}^k A^i\vect{\beta}$.
     \STATE Let
     $\pi(Z)=(ad-bc)-(a+d)Z+Z^2$;
     \hfill\COMMENT{The characteristic polynomial of $A$}
     \STATE Let $F(Z)=f_0+f_1Z=2\text{-MMP}(k+1,\pi)$; \hfill\COMMENT{$Z^{k+1}\mod\pi(Z)$, using~\cref{alg:fiduccia}}
    \RETURN $(f_1 A+(f_0-1)I_2)(A-I_2)^{-1}\vect{\beta}$.\hfill\COMMENT{$(A^{k+1}-I_2)(A-I_2)^{-1}\vect{\beta}$}
  \end{algorithmic}
\end{algorithm}

\begin{lemma}\label{lem:matfid}
  \Cref{alg:matgeomsum}, computing the matrix geometric sum, requires
  between $40+8\lceil\log_2(d_p+1)\rceil$ and
  $40+11\lceil\log_2(d_p+1)\rceil$ arithmetic operations.
\end{lemma}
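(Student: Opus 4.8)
The statement is purely about the operation count, so I would set correctness aside — it is immediate: by Cayley--Hamilton the call $2\text{-MMP}(k{+}1,\pi)$ inside \cref{alg:matgeomsum} returns the coordinates $(f_0,f_1)$ of $A^{k+1}$ in the basis $\{I_2,A\}$, and since $A-I_2$ is invertible, $\sum_{i=0}^k A^i=(A^{k+1}-I_2)(A-I_2)^{-1}=(f_1A+(f_0-1)I_2)(A-I_2)^{-1}$, which is exactly what the algorithm outputs. The plan is then to count arithmetic operations in $\Z_p$ in the three parts of \cref{alg:matgeomsum}: (1) forming the characteristic polynomial $\pi$; (2) the recursive call to \cref{alg:fiduccia} (with exponent $k+1$, which is the ``$d+1$'' of the statement since \cref{alg:veval} invokes the routine with $k=d$); (3) assembling the final vector from $f_0,f_1,A,\vect{\beta}$. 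Parts (1) and (3) have cost independent of $d$; all of the $d$-dependence sits in part (2), and it is logarithmic because \cref{alg:fiduccia} halves its exponent argument at each recursive step.

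The core of the argument is the analysis of $2\text{-MMP}(n,\pi)$. Unrolling the recursion, the exponent follows $n\mapsto\lfloor n/2\rfloor\mapsto\cdots\mapsto 1$, which takes $\lfloor\log_2 n\rfloor$ halving steps (the base case just returns $Z$ at no cost). At each step the algorithm performs the ``square modulo $\pi$'' update, which from the formula $S\leftarrow(t_0^2-t_1^2p_0)+(2t_0t_1-t_1^2p_1)Z$ costs exactly $8$ operations ($t_0^2$, $t_1^2$, $t_0t_1$, one doubling, $t_1^2p_0$, $t_1^2p_1$, two subtractions); and when the exponent at that step is odd it additionally performs the ``shift modulo $\pi$'' update $(-s_1p_0)+(s_0-s_1p_1)Z$, costing $3$ more ($s_1p_0$, $s_1p_1$, one subtraction). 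Hence each level costs $8$ or $11$ field operations, so the call costs between $8\lfloor\log_2(d+1)\rfloor$ and $11\lfloor\log_2(d+1)\rfloor$; the lower extreme is hit when $d+1$ is a power of two (no level has an odd exponent), the upper when $d+1=2^{j+1}-1$ (every level does).

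For the constant part, forming $\pi(Z)=(ad-bc)-(a+d)Z+Z^2$ takes two products and two additions, and assembling $(f_1A+(f_0-1)I_2)(A-I_2)^{-1}\vect{\beta}$ takes: $f_0-1$, the four scalings in $f_1A$ and two diagonal additions; the two diagonal subtractions in $A-I_2$; the inversion of the $2\times2$ matrix $A-I_2$ (determinant: two products and a subtraction; one field inversion; scaling of the adjugate: four products and two sign flips); and one $2\times2$ matrix product (or two matrix--vector products) against $\vect{\beta}$. Summed under a fixed accounting convention this is a constant, which one tallies to roughly $40$. Adding it to the logarithmic cost of part (2) gives the claimed bounds $40+8\lceil\log_2(d+1)\rceil$ and $40+11\lceil\log_2(d+1)\rceil$.

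The delicate point — and the part I expect to need the most care — is the bookkeeping that makes \emph{both} inequalities hold simultaneously: one must fix a consistent notion of ``one arithmetic operation'' (whether doublings, sign changes, and the index/parity arithmetic of the recursion are counted), and then reconcile $\lfloor\log_2(d+1)\rfloor$ (the true number of recursion levels) with the $\lceil\log_2(d+1)\rceil$ in the statement together with the per-level parity pattern. A short case split on the binary expansion of $d+1$ handles it: when $d+1$ is a power of two one uses the $8\lceil\cdot\rceil$ lower bound and the absence of odd levels, and otherwise $\lfloor\log_2(d+1)\rfloor=\lceil\log_2(d+1)\rceil-1$ absorbs the discrepancy against the constant. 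This is routine, but it is where the precise constant $40$ (rather than a bare \bigO{1}) has to be pinned down.
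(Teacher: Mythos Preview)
Your approach is essentially the same as the paper's: split the cost into the recursive part of \cref{alg:fiduccia} (between $8$ and $11$ field operations per halving step, according to the parity of the exponent) and a constant part summing to $40$. The paper's breakdown of the constant is $5$ for the characteristic polynomial, $5$ for the matrix inverse, $2\times 6=12$ for the two matrix--vector products, and $18$ for the matrix polynomial evaluation; your itemization differs slightly in how the pieces are grouped, and you are more explicit than the paper about the floor-versus-ceiling discrepancy in the recursion depth, but the argument is the same.
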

\begin{proof}
  Counting only (modular) field operations, \cref{alg:fiduccia}
  requires between $8$ and $11$ times $\lceil\log_2(d_p)\rceil$ additions
  and multiplications depending on
  the binary decomposition of $d_p$.
  Then we have $5$ operations for the matrix inverse, twice $6$ operations for
  the matrix-vector multiplications and $18$ operations for the matrix
  polynomial evaluation. Plus $5$ operations for the characteristic
  polynomial.
\myqed\end{proof}

\subsection{Formalization of the protocol}

The dynamic externalized polynomial evaluation scheme consist of the
following algorithms \Setup, \Read, \Update, \Eval and \Verif between a
Client $\client$ with state \clstate~and the Server $\server$ of state
\servstate.
Following the definition of a VDPE scheme of~\cref{ssec:vdpe},
\Setup is detailed in~\cref{alg:setup};
\Read is detailed in~\cref{alg:read};
\Update is detailed in~\cref{alg:update};
\Eval is detailed in~\cref{alg:veval};
and
\Verif is detailed in~\cref{alg:verif}.
Finally, a lighter variant of \Setup and \Update (detailed in ~\cref{alg:update2}) dedicated to the DPoR protocol is proposed.
The exchanges are summarized in~\cref{proto:full} of~\cref{app:full}.
\begin{algorithm}[!ht]
  \caption{$\Setup(1^\compsec,P)$}\label{alg:setup}
  \begin{algorithmic}[1]
    \REQUIRE $1^\compsec$; $p\in\Primes$, $P=\sum_{i=0}^d p_iX^i\in\Z_p[X]$;
    \REQUIRE a partially homomorphic cryptosystem $E/D$
    satisfying~\cref{eq:homo:addmul}, \emph{for any dot-product of
      size $d+1$, modulo $p$}.
    \ENSURE $\servstate{}$, $\clstate{}$.
     \STATE Client: generates order $p$ groups $\GG_1$, $\GG_2$,
     $\GG_T$ with non-degenerate pairing
     $e:\GG_1\times\GG_2\rightarrow\GG_T$ and generators $g_1,g_2,g_T=\gen$;
     \STATE Client: generates a public/private key pair $(\pk,\sk)$ for $E/D$;
     \STATE Client: randomly selects $s\random\Z_p{\setminus}\{0,1\}$,
     $\vect{\alpha},\vect{\beta}\random\Z_p^2$,
     $\matr{\Phi}\random\Z_p^{2{\times}2}$, s. t. $s\matr{\Phi}-I_2$
	  is invertible;
     \STATE Client: computes
     $\bar{P}(X)=\sum_{i=0}^dX^i({p_i}\vect{\alpha}+\matr{\Phi}^i\vect{\beta})$, $W=E_{\pk}(P)=[E(p_i)]_{i=0}^{d}$,
     $S=[g_1^{s^k}]_{k=0}^{d-1}\in\GG_1^d$,
     $\bar{H}=[g_2^{\bar{p}_i}]_{i=1}^{d}\in\GG_2^{2{\times}d}$,
     $\bar{\mathcal{K}}=g_T^{\bar{P}(s)}\in\GG_T^2$
	  and
	  $d_p{=}d\tmod{\varphi(p)}\equiv{d\tmod{p{-}1}}$;
     \STATE\label{lin:mtWC} Client: $r_W=\mtrootfromleaves (W)$;
     \hfill\COMMENT{root of the Merkle tree}
    \STATE Client: sends $\pk,\GG_1,\GG_2,g_1,g_2,\GG_T,e,W,S,\bar{H}$ to the Server;
    \CRETURN{Client} $\clstate{} \gets\{\pk,\sk,\GG_{1,2,T},g_{1,2,T},e,s,\vect{\alpha},\vect{\beta},\matr{\Phi},\bar{\mathcal{K}},r_W,d_p\}$;
    \STATE\label{lin:mtWS} Server: $T_W\gets\mtcreate(W)$;
    \hfill\COMMENT{\emph{the Merkle tree}}
    \CRETURN{Server} $\servstate{}  \gets\{\pk,\GG_{1,2,T},e,W,T_W,S,\bar{H}\}$.
  \end{algorithmic}
\end{algorithm}

\begin{algorithm}[!ht]
  \caption{$\Read(i \clstate, \servstate)$}\label{alg:read}
  \begin{algorithmic}[1]
    \REQUIRE $i\in [0..d]$, $(\clstate,\servstate)=\Setup(1^\compsec,P)$.
    \ENSURE $p_i$ the value of the $i^{th}$ coefficient of $P$.

\STATE Client: sends $i$;
    \STATE Server: $L_i\gets\mtuncles(i,W,T_W)$;
    \STATE Server: sends $w_i$, $L_i$ to the Client;
    \IF{$r_W\neq\mtrootfrompath(i,w_i,L_i)$}
\newline\COMMENT{\emph{the stored root does not match the received element and
    uncles}}
    \CRETURN{Client} \reject.
    \ELSE

\STATE Client: computes $p_i=D(w_i)$;
    \ENDIF
  \end{algorithmic}
\end{algorithm}

\begin{algorithm}[!ht]
  \caption{$\Update(i, p'_i, \clstate, \servstate)$}\label{alg:update}
  \begin{algorithmic}[1]
    \REQUIRE $i\in [0..d]$, $p'_i\in\Z_p^*$, $(\clstate,\servstate)=\Setup(1^\compsec,P)$.
    \ENSURE $(\clstate',\servstate')=\Setup(1^\compsec,P+(p'_i-p_i) X^i)$ or \reject.
\STATE Client: gets $\clstate=(\pk,\sk,\GG_{1,2,T},g_{1,2,T},e,s,\vect{\alpha},\vect{\beta},\matr{\Phi},\bar{\mathcal{K}},r_W,d_p)$,
\STATE Client: computes $w'_i= E(p'_i)$,
\STATE Client: computes $\bar{H}'_i\gets g_2^{p'_i \vect{\alpha}+\matr{\Phi}^i\vect{\beta}}$
\STATE Client: sends $i,w'_i$ if $(i>0)$ sends $\bar{H}'_i$;

    \STATE Server: $L_i\gets\mtuncles(i,W,T_W)$;
    \STATE Server: $T'_W\gets\mtupdateleaf(i,w'_i,T_W)$;
    \hfill\COMMENT{\emph{updates the Merkle tree}}
    \STATE Server: sends $w_i$, $L_i$ to the Client;
    \STATE Server: $\servstate^*\gets\servstate\backslash\{T_W,w_i\}\bigcup\{T'_W,w'_i \}$
    \IfThenElse{$i==0$}{Server:
      \algorithmicreturn~$\servstate'\gets\servstate^*$\newline}{Server:
      \algorithmicreturn~$\servstate'\gets\servstate^*\backslash\{\bar{H}_i\}\bigcup_{j=1}^2\{\bar{H'}_i[j]\}$}
    \IF{$r_W\neq\mtrootfrompath(i,w_i,L_i)$}
\newline\COMMENT{\emph{the stored root does not match the received element and
    uncles}}
    \CRETURN{Client} \reject.
    \ELSE

\STATE Client: computes $\Delta \gets g_2^{(p'_i-p_i)\alpha}$;
\STATE Client: computes
    $\bar{\mathcal{K}'}[j]\gets{e(g_1; \Delta[j]^{s^i})\cdot\bar{\mathcal{K}}[j]}$ for $j=1..2$;
    \STATE Client: computes $r'_W=\mtrootfrompath(i,w'_i,L_i)$;
    \CRETURN{Client} $\clstate'\gets\clstate\backslash\{\bar{\mathcal{K}},r_W\}\bigcup\{\bar{\mathcal{K}'},r'_W\}$.
    \ENDIF
  \end{algorithmic}
\end{algorithm}

\begin{algorithm}[!ht]
  \caption{$\Eval( \servstate, r)$}\label{alg:veval}
  \begin{algorithmic}[1]
\REQUIRE $ \servstate$ and a evaluation point $r\in\Z_p$;
\ENSURE $\zeta$ the encrypted evaluation of $P(r)$ and a proof $\bar{\xi}$.
\STATE\label{lin:paillier:veval}
Server: computes
$\zeta=\Transpose{W}\boxdot{x}=\prod_{i=0}^{d}w_i^{(r^i\mod{p})}$
\STATEx\COMMENT{\emph{via~\cref{eq:homo:addmul}}, see also, e.g., \cref{rk:paillierdp,alg:paillierdp}}
\STATE Server: $\bar{\xi}=\Transpose{[1_{\GG_T},1_{\GG_T}]}\in\GG_T^2$; $t=1_{\GG_1}$;
\FOR{$i=1$ \TO $d$}\hfill\COMMENT{Following the ideas of~\cref{ssec:dcqComplex}}
\label{lin:begfor:veval}
\STATE Server: $t\gets{S_{i-1}\cdot{t^r}}$;
\STATE Server:
$\bar{\xi}[j]\gets\bar{\xi}[j]\cdot{e(t;\bar{H}_i[j])}$ for $j=1..2$;
\ENDFOR\label{lin:endfor:veval}
\CRETURN{Server}  $\zeta,\bar{\xi}$.
  \end{algorithmic}
\end{algorithm}

\begin{algorithm}[!ht]
  \caption{$\Verif(\clstate, r,\zeta,\bar{\xi})$}\label{alg:verif}
  \begin{algorithmic}[1]
\REQUIRE $\clstate$, the evaluation point $r\in\Z_p$, its encrypted evaluation $\zeta$ and a proof $\bar{\xi}$ ;
\ENSURE $z=P(r)$ or \reject.
\STATE Client: computes $r\matr{\Phi}$ and
$\vect{c}\gets\left((r\matr{\Phi})^{d_p+1}-I_2\right)\cdot(r\matr{\Phi}-I_2)^{-1}\cdot\vect{\beta}$\hfill\COMMENT{via~\cref{alg:matgeomsum}}

\STATE Client: computes $z=D_{\sk}(\zeta)\mod{p}$;
\IF{$\bar{\xi}[j]^{s-r}g_T^{{z}\vect{\alpha}[j]+\vect{c}[j]} =
  \bar{\mathcal{K}}[j]$ for $j=1..2$}
\CRETURN{Client} $z$.
\ELSE
\CRETURN{Client} \reject.
\ENDIF
  \end{algorithmic}
\end{algorithm}

\begin{remark}
We show next how to use a dynamic VPE protocol inside a DPoR scheme.
There, the client updates a polynomial coefficient $p_i$ by sending an
encryption of only the difference $\delta= p'_i-p_i$ without needing
to know the value of $p_i$.
In this variant, the value of $p_i$ does not have to be checked and
the hash tree is superfluous.

We thus consider \deltaSetup, a variant of \Setup where the
client does not need $\mtrootfromleaves(W)$ (line~\ref{lin:mtWC} of
\cref{alg:setup}) and the server does not compute the tree at all
(remove $\mtcreate(W)$, line~\ref{lin:mtWS}). $\clstate{}$ and $\servstate{} $
are therefore reduced to
$\clstate{}=\{\pk,\sk,\GG_{1,2,T},g_{1,2,T},e,s,\vect{\alpha},\vect{\beta},\matr{\Phi},\bar{\mathcal{K}},d_p\}$
and $\servstate{} =\{\pk,\GG_{1,2,T},e,W,S,\bar{H}\}$ (note that this prevents
using the \Read operation on the polynomial).
In addition, we also consider a variant of the \Update algorithm,
which takes $\delta= p'_i -p_i$ as input instead of $p_i$, as detailed
in \cref{alg:update2}. The corresponding dynamic (from the
difference) externalized polynomial evaluation scheme is then reduced
to the algorithms $\deltaSetup$, $\deltaUpdate$, $\Eval$ and $\Verif$.
\end{remark}

\begin{algorithm}[!ht]
  \caption{$\deltaUpdate(i, \delta, \clstate, \servstate)$}\label{alg:update2}
  \begin{algorithmic}[1]
    \REQUIRE $i\in [0..d]$, $\delta\in\Z_p^*$, $(\clstate,\servstate)=\Setup(1^\compsec,P)$.
    \ENSURE $(\clstate',\servstate')=\Setup(1^\compsec,P+\delta X^i)$ or \reject.
    \STATE Client: computes $e_\delta=E_{\pk}(\delta)$, $\Delta=g_2^{\delta\vect{\alpha}}$;
    \STATE Client: sends $i,e_\delta,\Delta$ to the Server;
    \STATE Server: sends $w_i$ to the Client;
    \IfThenElse{$i==0$}{Server:
      \algorithmicreturn~$\servstate'\gets\servstate^*$\newline}{Server:
      \algorithmicreturn~$\servstate'\gets\servstate^*\backslash\{\bar{H}_i\}\bigcup_{j=1}^2\{\bar{H}_i[j]\cdot\Delta[j]\}$}
\STATE Client: computes
    $\bar{\mathcal{K}'}[j]\gets{e(g_1;\Delta[j]^{s^i})\cdot\bar{\mathcal{K}}[j]}$ for $j=1..2$;
    \CRETURN{Client} $\clstate'\gets\clstate\backslash\{\bar{\mathcal{K}}\}\bigcup\{\bar{\mathcal{K}'}\}$.
  \end{algorithmic}
\end{algorithm}

We have now our main result for the
Dynamic Verified Evaluation of Secret Polynomials.

\begin{restatablebackref}{theorem}{fullVESPoTHM}{thm:full}{app:proofs}
  Under the $d$-BSDH, DLOG, CRHF and DLM security assumptions of
  \cref{sec:secu}, the protocol composed
  of~\cref{alg:setup,alg:read,alg:update,alg:update2,alg:veval,alg:verif}
  (summarized in \cref{proto:full}) is a fully secure verifiable
  polynomial evaluation scheme, as defined in~\cref{def:secVDPE} and
  the complexity bounds of its algorithms are given in
  \cref{tab:complexity}.
\end{restatablebackref}

\begin{table}[!ht]\centering
\renewcommand*{\arraystretch}{1.25}
\caption{Complexity bounds for verifiable dynamic and ciphered
  polynomial evaluation {\footnotesize
    (function of the degree $d$ of the polynomial, for groups of
    supposed constant cardinality: number of group elements/arithmetic
    operations)}.}\label{tab:complexity}
  \adjustbox{max width=\textwidth}{\centering
    \begin{tabular}{ccccc}
      \toprule
      \multicolumn{2}{c}{}& Server & Communication & Client \\
      \midrule
      \multicolumn{2}{c}{Storage} & \bigO{d} & & \bigO{1} \\
      \midrule
      \multirow{5}{*}{\rotatebox[origin=c]{90}{Comput.}}
      &\Setup	& \bigO{d}	& \bigO{d}	& \bigO{d} \\
      &\Read	& \bigO{\log(d)}	& \bigO{\log(d)}	& \bigO{\log(d)} \\
      &\Update	& \bigO{\log(d)}	& \bigO{\log(d)}	& \bigO{\log(d)} \\
      &\deltaUpdate	& \bigO{1}	& \bigO{1}	& \bigO{1} \\
      &\Eval/ \Verif	& \bigO{d}	& \bigO{1}	& \bigO{1} \\
      \bottomrule
    \end{tabular}
  }
\end{table}

For the complexity bounds we still consider the cardinality of the
coefficient domain to be a constant (so that, again, even exponentiations not
involving the degree are considered constant) and we also consider
that one encryption/decryption with the linearly homomorphic
cryptosystem requires a number of arithmetic operations constant with
respect to the degree.

\subsection{Experiments}\label{ssec:full-exper}
To assess the efficiency of our protocol, we
implemented~\cref{proto:full} using the following libraries%
\footnote{
\url{https://gmplib.org},
\url{https://linbox-team.github.io/fflas-ffpack},
\url{https://github.com/scipr-lab/libsnark.git},
\url{https://github.com/relic-toolkit/relic}.
}:
\texttt{gmp-6.2.1} for modular operations,
\texttt{fflas-ffpack-2.4.3} for linear algebra,
\texttt{relic-0.6.0} for Paillier's cryptosystem
and pairings (we used a ``bn-p254'' pairing),
\texttt{libsnark} (commit \texttt{2af4402}) for baseline polynomial
evaluation verification.
Our source code to perform these experiments is available via the
following GitHub repository: \url{https://github.com/jgdumas/vespo}.

\begin{table}[htbp]\centering
\caption{Comparative behaviors of pairings and Paillier system on the
  Server and Client sides with a $254$-bits group size for the
  protocol of~\cref{proto:full}
{\footnotesize (column 'pows' is the time to
  perform the lhs exponentiations (by $s-r$ and by
  $D(\zeta)\alpha[j]+c[j]$);
  column 'c' times the matrix geometric sum; and column 'D' times
  the single Paillier's deciphering;
  below are some baseline comparisons:
  'Horner' is a witness direct evaluation in that group,
  '\texttt{libsnark}' is an unciphered and static polynomial evaluation verification.
  Each experiment was performed $11$ times and we report the median value,
  with a maximum variance lower than $16.4$\% between runs)}.
}\label{tab:lintests}\setlength{\tabcolsep}{2pt}
\adjustbox{max width=\textwidth}{\centering
    \begin{tabular}{crrrrccc}
      \toprule
      \multirow{3}{*}{Degree} &
      \multicolumn{4}{c}{Server Certification} & \multicolumn{3}{c}{Client
        Verification}\\
\cmidrule(lr){2-5}
      & \multicolumn{2}{c}{1 core} &  \multicolumn{2}{c}{4 cores}
      & \multicolumn{3}{c}{1 core} \\
\cmidrule(lr){2-3}\cmidrule(lr){4-5}\cmidrule(lr){6-8}
      & \multicolumn{1}{c}{$\zeta$} & \multicolumn{1}{c}{$\xi$} &
      \multicolumn{1}{c}{$\zeta$} & \multicolumn{1}{c}{$\xi$} & $D$ & $c$ & pows  \\
      \midrule
256 & 0.12s & 0.08s & 0.04s & 0.03s & \multirow{10}{*}{0.9ms} & \multirow{10}{*}{$<$0.1ms} & \multirow{10}{*}{0.7ms}  \\
512 & 0.24s & 0.15s & 0.07s & 0.05s & & & \\
1024 & 0.48s & 0.30s & 0.13s & 0.10s & & & \\
2048 & 0.95s & 0.61s & 0.26s & 0.18s & & & \\
4096 & 1.90s & 1.22s & 0.51s & 0.35s & & & \\
8192 & 3.82s & 2.44s & 1.01s & 0.70s & & & \\
16384 & 7.58s & 4.87s & 2.02s & 1.40s & & & \\
32768 & 15.24s & 9.78s & 4.05s & 2.75s & & & \\
65536 & 30.55s & 19.58s & 8.06s & 5.45s & & & \\
131072 & 60.82s & 39.02s & 16.15s & 10.90s & & & \\
\end{tabular}}
\adjustbox{max width=\textwidth}{\centering
\begin{tabular}{lcrrrrr}
\toprule
	& Client & \multicolumn{4}{c}{Server
          (1 core)} & \multicolumn{1}{c}{Proof}\\
\cmidrule(lr){3-6}
	& 1 core & $d^\circ$\hfill 256 & 1024 & 8192 & 131072 & \multicolumn{1}{c}{size}\\
\midrule
\multicolumn{2}{c}{Horner (no verif., no crypt.)} & $<$0.1ms &0.2ms&
1.6ms& 32.0ms & \multicolumn{1}{c}{-}\\
\texttt{libsnark} (no crypt.) & 3.8ms & 0.04s & 0.12s & 0.74s &10.57s & 287B\\
Here (v. \& c. \& dyn.) & 1.6ms & 0.20s &  0.78s & 6.26s & 99.84s & 320B\\
\bottomrule
\end{tabular}}
\end{table}

To observe the effect of the chosen homomorphic systems (Paillier with an RSA
modulus size of $2048$ bits and the pairing), we ran the experiments,
on a single core of an intel Gold 6126 \@ 2.6GHz
for the Client and Horner computations and on one or four
cores for the Server
(the parallelization of the prefix-like Server part
of~\cref{alg:veval} is given in~\cref{app:parprefix}).
In~\cref{tab:lintests}, we thus compare the Server time to the Client time
of our protocol, to that of a simple (witness) polynomial evaluation
(Horner-like) in this group and of an unciphered static polynomial evaluation
with a SNARK (a ciphered evaluation with these SNARK would require to
arithmetize the Homomorphic cryptosystem and seems still out of
reach).

First of all, of course, the Server time, using homomorphic arithmetic,
can be several orders of magnitude slower than the simple polynomial
evaluation, while indeed being clearly linear.
Second, for the protocol itself, we see that both homomorphic
evaluations of the Server are quite similar, even if the Paillier
cryptosystem is more expensive for large modulus. Then, on the Client
side and for the considered degrees, the dominant computation is that
of a single Paillier's deciphering (and that the only part
in practice potentially non-constant in the degree is by far the most
negligible).
Third our Client is even faster than an unciphered one (we use less
pairings than \texttt{libsnark})
and for a large
enough degree, we can observe the Client time to win over
the linear time pure polynomial evaluation.
Also, our ciphered Server slowdown remains within a factor close to
four (or only two without Paillier) when compared to the static and unciphered
one.

\section{Low Server storage dynamic PoR}\label{sec:por}
Recall that Proofs of Retrievability (PoR)
allow a Client with limited storage, who has outsourced her data to an
untrusted Server, to confirm via an efficient \Audit protocol
that the data is still being stored in its entirety.
The lower bound of \cite[Theorem~4]{Anthoine:hal-02875379} proves that a
tradeoff is inevitable between low/high audit cost and high/low
storage overhead.
The dynamic PoR schemes of
\cite{CashPOR13,Shi:2013:orampor} optimize for fast audits. They
incur a large \bigO{N} storage overhead
on the Server, but can perform
audits with only $(\log N)^{\bigO{1}}$ communication and computation for the
Client and Server.

\begin{table}[!ht]\centering
\caption{Attributes of some selected DPoR schemes} \label{tab:attributes}
\renewcommand*{\arraystretch}{1.25}
\setlength{\tabcolsep}{2pt}
  \adjustbox{max width=\textwidth}{\centering
    \begin{tabular}{lccccc}
      \toprule
      \multirow{3}{*}{Protocol} & \multicolumn{2}{c}{Server} & &
      \multicolumn{2}{c}{Client}\\
\cmidrule(lr){2-3}\cmidrule(lr){5-6}
      & Extra & \Audit & \Audit & \multirow{2}{*}{Storage} & \Audit\\
      & Storage & Comput. & Comm. & & Comput.\\
      \midrule
      \cite{Shi:2013:orampor}&
      $\bigO{N}$ & {\bf\color{darkgreen} $\bigO{\log{N}}$ }& {\bf\color{darkgreen} $\bigO{\log{N}}$ }& {\bf\color{darkgreen} $\bigO{1}$ }& {\bf\color{darkgreen} $\bigO{\log{N}}$ }\\
      \cite{Anthoine:hal-02875379}&
      {\bf\color{darkgreen} $\smallo{N}$ } & $N+\smallo{N}$ & $\bigO{\sqrt{N}}$ & {\bf\color{darkgreen} $\bigO{1}$ } & $\bigO{\sqrt{N}}$ \\
      Here &
      {\bf\color{darkgreen} $\smallo{N}$ } & $N+\smallo{N}$ & {\bf\color{darkgreen} $\bigO{\log{N}}$ } & {\bf\color{darkgreen} $\bigO{1}$ } & {\bf\color{darkgreen} $\bigO{\log{N}}$ }\\
      \bottomrule
    \end{tabular}
}
\end{table}

\begin{table*}[htbp]\centering
  \caption{Private verifiable Client/Server DPoR protocol with low storage Server}\label{protoPor}
\fbox{\adjustbox{max width=\textwidth}{\centering
  \begin{tabular}{cccc}
    & Server & Communications & Client \\
    \midrule
    \multirow{9}{*}{\Init}
&&& {\bf Input}: $p$ prime, cryptosystem $E/D$\\
&&& {\bf Input}: $\MM \in \Z_p^{m \times n}$\\
&&& $\svec \random{\Z_p^*}$, \(\Transpose{\uu}\gets[\svec^i]_{i=0}^{m-1} \in \Z_p^m\) \\

&&& $\Transpose{\vv}\gets\Transpose{\uu}\MM\in\Z_p^n$\\

& \multicolumn{3}{c}{\(\begin{array}{:rcl:}
      \cdashline{1-3}
      & \multirow{2}{*}{\ovalbox{\deltaSetup}} & \plleftarrow[\commlength]{} \convertintopoly(\vv)\footnotemark[4]\hspace*{1cm}\\
 \hspace*{2cm}\servstate  \plleftarrow[\commlength]{} & & \plrightarrow[\commlength]{} \clstate \\
      \cdashline{1-3}
    \end{array}\)} \\

&$T_\MM \gets\mtcreate(\MM)$ &    \pleftarrow{ \MM}& $r_M \gets \mtrootfromleaves(\MM)$\\
& {\bf Output: } $\servstate, \MM, T_\MM  $&&  {\bf Output: } $\clstate, \svec,r_M $ \\

    \midrule
    \multirow{5}{*}{\Write}
    &$L_{M_{ik}}\gets\mtuncles(k+i{\cdot}n,M,T_M)$ &
    \pleftarrow{i,k} & {\bf  Input:} $i,k,\MM'_{ik}$ \\
    && \prightarrow{\MM_{ik},L_{\MM_{ik}},\ww_{k}}  &$r_M \checks{=} \mtrootfrompath(k+i{\cdot}n,M_{ik},L_{M_{ik}})$\\
   &&& $\delta\gets \gamma^i(\MM'_{ik}-\MM_{ik})$\\
& \multicolumn{3}{c}{\(\begin{array}{:rcl:}
      \cdashline{1-3}
 \servstate  \plrightarrow[\commlength]{}   & \multirow{2}{*}{\ovalbox{\deltaUpdate}} &\plleftarrow[\commlength]{} \delta,  \clstate \\
\hspace*{2cm} \servstate\plleftarrow[\commlength]{} & & \plrightarrow[\commlength]{} \clstate \hspace*{2cm} \\
      \cdashline{1-3}
 \end{array}\)} \\

    & $\MM_{ik}\gets\MM'_{ik}$, $T_\MM\gets\mtupdateleaf(k+i{\cdot}n,\MM'_{ik},T_\MM)$&\pleftarrow{\MM'_{ik}} & $r_M \gets \mtrootfrompath(k+i{\cdot}n,M'_{ik} ,L_{M_{ik}})$\\
& {\bf Output: } $\servstate, \MM, T_\MM  $&&  {\bf Output: } $\clstate,r_M $ or \reject \\
    \midrule
    \multirow{4}{*}{\Audit} & form
    \(\xx\gets[r^k]_{k=0}^{n-1}{}^\intercal\in\Z_p^{n}\), then $\yy\gets\MM\xx$
    & \pleftarrow{r}
& $r\random{\Z_p^*}$ s.t. $(r\matr{\Phi}-I_2)\in{GL_2(\Z_p)}$ \\
&{\(\begin{array}{:rc:}
      \cdashline{1-2}
  \servstate, r \longrightarrow    & \multirow{2}{*}{\ovalbox{\Eval}}  \\
\zeta,\bar{\xi} \longleftarrow &   \\
      \cdashline{1-2}
 \end{array}\)} & \prightarrow{\yy,\zeta,\bar{\xi}}
& %
{\(\begin{array}{:cl:}
      \cdashline{1-2}
  \multirow{2}{*}{\ovalbox{\Verif}} & \longleftarrow \clstate, r,\zeta,\bar{\xi} \\
 & \longrightarrow  D_{\sk}(\zeta) \, or \, \reject \\
      \cdashline{1-2}
 \end{array}\)} \\

&&&  $\Transpose{\uu}y\checks{=} D_{\sk}(\zeta)$\\
&&&  {\bf Output: } \accept or \reject\\

  \end{tabular}
}}
\end{table*}

Instead, \cite{Anthoine:hal-02875379}
optimizes for small storage; their scheme
has only sub-linear storage overhead of \bigO{N/\log N}, but a
higher audit cost of \bigO{N} on the Server, and \bigOsqrt{N} Client
time and communication. The authors demonstrate that, for
reasonable deployment scenarios on commercial cloud platforms, the
higher audit cost is more than offset by the greatly reduced costs of
extra persistent storage, especially if audits are only performed a few
times per day.

We here further improve on the low storage overhead approach of
\cite{Anthoine:hal-02875379}, by our scheme with a small
\smallo{N} storage overhead, but only \bigO{\log N}
communication and Client computation cost for audits.
That is, our new protocol still benefits from small storage
overhead, while effectively pushing the higher computational cost of audits
(which is inevitable from the lower bound) entirely off the Client and
onto the Server.
These savings are highlighted in \cref{tab:attributes}.

An easy argument demonstrates that our \bigO{\log N} Client
cost for audits is optimal. If each audit has \smallo{\log N}
cost (and thus transcript size), then the total number
of possible transcripts is \smallo{N}, which is a contradiction with the
definition of retrievability; not every
$N$-bit database could be recoverable via independent audit transcripts.

\subsection{Matrix based approach for audits} \label{ssec:First-PoR}
Here we summarize the DPoR of~\cite{Anthoine:hal-02875379} upon which
our new scheme is based.
The premise is to
treat the data, consisting of $N$ bits organized in machine words, as
a matrix $\MM\in\Z_p^{m\times n}$, where $\Z_p$ is a suitable finite
field of size~$p$.
Crucially, the choice of ring $\Z_p$ does not require any modification
to the raw data itself; that is, any element of the matrix $\MM$ can
be retrieved in $O(1)$ time from the underlying raw data storage.
The scheme is based on the commutativity of matrix-vector products. During
an $\Init$ phase, the Client chooses a secret vector $\uu$ of
dimension $m$ and
computes $\Transpose{\vv}=\Transpose{\uu}\MM$; both vectors $\uu$ and $\vv$ are then
stored by the Client for later use, while the Server stores the original
data and hence the matrix $\MM$ in the clear.
Reading or updating individual entries in $\MM$ (\Read, and
\Write~protocols in the DPoR case), can be performed
efficiently with the use of Merkle hash trees and from the observation
that changing one element of $\MM$ only requires changing one entry in
the Client's secret control vector~$\vv$.
To perform an \Audit, the Client and Server engage in a 1-round protocol:
\begin{enumerate}[leftmargin=3\labelsep]
  \item Client chooses a random vector $\xx$ of dimension $n$, and sends
  $x$ to Server.
  \item Server computes $\yy = \MM\xx$ and sends the dimension-$m$
  vector $y$ back to Client.
  \item Client computes two dot products $\Transpose{\uu}\yy$ and $\Transpose{\vv}\xx$,
  and checks that they are equal.
\end{enumerate}
The proof of retrievability relies on the fact that observing several
successful audits allows, with high probability, recovery of the correct
matrix $\MM$, and therefore of the entire database.
The communication costs are \bigO{n} and \bigO{m} in steps 1 and 2
respectively, and the Client computation in step 3 is \bigO{m+n},
resulting in \bigOsqrt{N} total communication and Client computation
when optimizing the matrix dimensions to roughly $m=n=\sqrt{N}$.

\begin{table*}[htbp]\centering
  \caption{Modification of the DPoR audit protocol, with 254-bits groups,
    2048-bits Paillier, on a Gold 6126 \@ 2.6GHz \& 10 GB/core
    ({\footnotesize
  real time are median values for a single run;
  each experiment was performed $11$ times;
  the maximum relative difference between the runs was
  at most 3.6\%}).}\label{table:results}
  \renewcommand{\arraystretch}{1} %
  \setlength{\tabcolsep}{4pt}
  \adjustbox{max width=\textwidth}{\centering
    \begin{tabular}{lcccc}
      \toprule
      \multicolumn{1}{c}{Database} & $1$GB     & $10$GB    &
      $100$GB   & $1$TB \\
      \toprule
      \multicolumn{5}{c}{Private-verified audit using $57$-bits
        prime~\cite[Figure~1 \& Tables~5-6-7]{Anthoine:hal-02875379}\footnotemark[5]} \\
      \midrule
      {Matrix view}
      & $12339{\times}12432$ & $39131{\times}39200$ & $123831{\times}123872$ & $396281{\times}396368$ \\
      {Server extra storage}
      & \color{darkgreen}  ${<}0.01$\% & \color{darkgreen}  ${<}0.01$\%  & \color{darkgreen}  ${<}0.01$\%  &\color{darkgreen}   ${<}0.01$\%   \\
      {Client Storage}
      & $169$KB & $535$KB & $1\,693$KB & $5\,418$KB \\
      {Server Audit (1/12 cores)}
      &\color{darkgreen}  $0.29$s/$0.04$s &\color{darkgreen}  $2.68$s/$0.30$s &\color{darkgreen}  $29.04$s/$3.36$s&\color{darkgreen}  $219.7$s/$41.48$s \\
      {Communications}
      &\color{darkgreen}  $169$KB & $535$KB & $1\,693$KB & $5\,418$KB \\
      {Client Audit (1 core)}
      &\color{darkgreen}  $0.6$ms &\color{darkgreen}  $1.7$ms &\color{darkgreen}  $5.3$ms & $18.3$ms \\
      \toprule \multicolumn{5}{c}{Square Dynamic-ciphered delegated polynomial
        evaluation with $254$-bits groups of~\cref{protoPor}\footnotemark[6]} \\
      \midrule
      {Matrix view}
      & $5815{\times}5816$ & $18390{\times}18390$ & $58154{\times}58154$ & $186092{\times}186093$\\
      {Server extra storage}
      & $0.12$\% & $0.04$\% & $0.01$\% & ${<}0.01$\% \\
      {Client storage}
      & \color{darkgreen} $0.94$KB  & \color{darkgreen} $0.94$KB  & \color{darkgreen} $0.94$KB  & \color{darkgreen}  $0.94$KB  \\
      \multicolumn{1}{l}{Server Audit (1/12 cores): matrix-vector step}
      & $1.1$s/$0.2$s & $11.3$s/$1.3$s & $113.4$s/$12.9$s & $1\,152.5$s/$131.1$s \\
      \multicolumn{1}{l}{Server Audit (1/12 cores): polynomial step}
      &  $4.4$s/$0.5$s  &  $13.5$s/$1.4$s  &  $42.6$s/$4.2$s  &  $141.7$s/$13.4$s  \\
     {Communications}
      & $181$KB & $571$KB & $1\,803$KB & $5\,770$KB \\
      \multicolumn{1}{l}{Client Audit (1 core): dotproduct step}
      & $3.2$ms & $8.4$ms & $13.1$ms & $37.9$ms  \\
      \multicolumn{1}{l}{Client Audit (1 core): polynomial step}
      &  $1.7$ms  &  $1.7$ms  &  $1.7$ms  &  $1.7$ms  \\
 \toprule
 \multicolumn{5}{c}{Rectangular Dynamic-ciphered delegated polynomial
        evaluation with $254$-bits groups of~\cref{protoPor}\footnotemark[6]} \\
      \midrule
      {Matrix view}
      & $6599{\times}5125$ & $7265{\times}46551$ & $7929{\times}426519$ & $8600{\times}4026778$ \\
      {Server extra storage}
      &  $0.11$\%  &  $0.10$\%  &  $0.09$\%  &  $0.08$\%  \\
      {Client storage}
      & \color{darkgreen}  $0.94$KB  & \color{darkgreen}  $0.94$KB  & \color{darkgreen}  $0.94$KB  & \color{darkgreen}  $0.94$KB  \\
      \multicolumn{1}{l}{Server Audit (1/12 cores): matrix-vector step}
      & $1.1$s/$0.2$s & $11.3$s/$1.3$s  & $113.2$s/$12.8$s  & $1\,147.9$s/$130.7$s \\
      \multicolumn{1}{l}{Server Audit (1/12 cores): polynomial step}
      &  $3.8$s/$0.4$s   &  $35.5$s/$3.6$s  &  $324.1$s/$30.6$s  &  $3\,064.8$s/$283.6$s  \\
      {Communications}
      & $205$KB &\color{darkgreen}  $226$KB &\color{darkgreen}  $246$KB &\color{darkgreen}  $267$KB \\
      \multicolumn{1}{l}{Client Audit (1 core): dotproduct step}
      & $3.7$ms & $4.0$ms  & $4.4$ms  & \color{darkgreen} $4.8$ms \\
      \multicolumn{1}{l}{Client Audit (1 core): polynomial step}
      &  $1.7$ms  &  $1.7$ms  &  $1.7$ms  & \color{darkgreen}  $1.7$ms  \\
      \bottomrule
    \end{tabular}
  }
\end{table*}

While this square-matrix setup is the basic protocol presented by
\cite{Anthoine:hal-02875379},
the authors also discuss a potential improvement in communication
complexity. Instead of $\xx$ being uniformly random over $\Z_p^n$, it
can instead be a \emph{structured} vector formed from a single random
element $r\in\Z_p$ as $\xx=[r^i]_{i=1}^{n}$. Then the communication on
step 1 is reduced to constant, and hence the total communication depends
only on the row dimension \bigO{m}. By choosing a rectangular matrix
$\MM$ with few rows and many columns, the communication can be made
arbitrarily small.
The tradeoff for this reduction in communication complexity is higher
Client storage of the control vector $\vv$ as well as higher Client
computation cost for the $n$-dimensional dot product $\Transpose{\vv}\xx$.
In \cite{Anthoine:hal-02875379}, the authors found that the savings in
communication were not worth the higher Client storage and computation,
and their experimental evaluation was based on the square matrix version
with overhead \bigOsqrt{N}.

\footnotetext[4]{Converts the vector $\vv$ into the polynomial
  $P(x)=\sum_{i=0}^{n-1} \vv_ix^i$.}
\subsection{Bootstrapping Client via VESPo}

Now we show how to modify the reduced communication version of the DPoR
protocol of \cite{Anthoine:hal-02875379} just presented in order to
eliminate the costly Client storage of $\vv\in\Z_p^n$ and computation of
$\Transpose{\vv}\xx$ during audits.
Our improved protocol is based on the observation that, when the audit
challenge vector $\xx$ is structured as $\xx=[r^i]$, then the expensive
Client dot product computation of $\Transpose{\vv}\xx$ is actually a polynomial
evaluation: if the entire of $\vv$ are the coefficients of a polynomial
$P$, then $\Transpose{\vv}\xx$ is simply $P(r)$.
We therefore eliminate the \bigO{n} Client persistent storage and
computation cost during audits by outsourcing the (encrypted) storage of
vector $\vv$ and computation of $\Transpose{\vv}\xx=P(r)$ with our novel protocol
for dynamic, encrypted, verifiable polynomial evaluation scheme of
\cref{proto:full}.
The obtained private-verification DPoR protocol,
combining that of \cite{Anthoine:hal-02875379} with our ciphered
polynomial evaluation in \cref{sec:full}, is
presented in~\cref{protoPor}.

\begin{restatablebackref}{theorem}{DPorTHM}{thm:porpol}{app:proofs}
  The protocol of~\cref{protoPor} is correct and sound under the
  $d$-BSDH, DLOG, CRHF and DLM security assumptions.
\end{restatablebackref}

\subsection{Experiments}\label{ssec:por-exper}
We now compare our modification of the DPoR protocol with the one
in~\cite{Anthoine:hal-02875379}.
\cref{table:results} has three blocks of experiments, each for four
database sizes ranging from $1$GB to $1$TB.
The first block of
experiments is a run of the original statistically secure DPoR protocol
with two dotproducts for the verification, considering the matrix as
$56$ bits elements modulo a $57$-bits prime.
The second block of experiments is our new modification, but still
using close to square matrices. Subject now to computational security,
we have to use a larger coefficient domain, namely here a $254$-bits
prime (with associated bilinear groups and a $2048$-bits Paillier
modulus, both estimated equivalent to a $112$-bit computational
security).
We separate the timings of the \Write~phase in two phases, the
remaining linear algebra phase and the new polynomial evaluation
phase (\deltaUpdate).
In the third block of experiments we use a more rectangular matrix,
trying to reduce communications while not increasing too much the
Server computational effort.

Overall, we see first in \cref{table:results}, that changing the
coefficient domain size increases the computational effort of the
Server in the linear algebra phase. Still, reducing the dimension of
the dotproduct for the Client, as shown in he third block, allows the
Client to be faster for databases larger than $100$GB.
In any case, the Client audit computational effort is never larger than a few
milliseconds and thus the dominant part is most certainly
communications.
On this aspect, we see that our modification allows for large
reductions in both the Client storage (even with square matrices) and
the overall communications.
Indeed, the Client private state is the vector dimension, the
Paillier's private key, twelve group elements
and two Merkle tree roots;
while the communications are mostly one vector of modular integers in
the smallest dimension.

The price to pay is from about a factor of four (large database) to an
order of magnitude (tiny database) for the Server computations (more
limited losses in the more realistic case where the Server can use
multiple cores).
In any case, the persistent Client storage is going from dozens of MB to less
than one KB, and the communication volume can be decreased by
more than two orders of magnitude.%

\section{Conclusion}

We have presented a protocol verifying publicly a dynamic unciphered
polynomial evaluation and then a protocol verifying privately a
dynamic ciphered polynomial evaluation.
Now, combining efficient and proven dynamicity for ciphered polynomial
with public verifiability raises security issues and reminds an open
problem.
Still, we have also presented a protocol verifying the outsourced
evaluation of secret polynomials. Client verification is of the order
of a few milliseconds and is faster than direct polynomial evaluation
over a small finite field, as soon as the degree of the polynomial is
larger than a few thousand.

This enables us in turn to reduce by several orders of magnitude the
communications, Client storage and Client computations for
state-of-the-art low Server-storage dynamic proofs of retrievability.

\footnotetext[5]{\url{https://github.com/dsroche/la-por}}
\footnotetext[6]{\url{https://github.com/jgdumas/vespo}}
\section*{Acknowledgments}
  We thank Gaspard Anthoine for providing us with some preliminary
  comparisons with the PBC and libpaillier libraries
  and Anthony Martinez for the libsnark baseline benchmarking.
  We thank Jean-Louis Roch for fruitful exchanges about the
  parallelization of the Server side
  and for pointing out \cite{Snir:1986:prefix}.
  Finally we thank the anonymous referees who greatly helped improve
  the paper.
  {This material is based on work supported in part by the
    Agence Nationale pour la Recherche under Grants
    ANR-21-CE39-0006 Sangria
    and ANR-15-IDEX-0002%
    .}
\bibliographystyle{abbrvurl}
\bibliography{pcbibshort}

\appendix
\section{Overview of VESPo exchanges}\label{app:full}

We here recall in~\cref{proto:full}, the summary of the
exchanges
of~\cref{alg:setup,alg:read,alg:update,alg:update2,alg:veval,alg:verif}.
This gives an
overview of our verifiable \& dynamic evaluation of ciphered
polynomials.

With this summary we can refine in~\cref{tab:protocounts} the results
of~\cref{tab:complexity} if we are using Paillier for the LHE
(a Paillier encryption is $1$ modular exponentiation and $3$ modular
multiplications, a Paillier decryption is $1$ exponentiation
and $1$ multiplication, homomorphic multiplication is an
exponentiation and homomorphic addition is a multiplication; then we
approximate~\cref{alg:fiduccia} with $6$ exponentiationsand $40$ modular operations and we
approximate the application of the pairing bilinear map with $1$
exponentiation).

\begin{table}[htbp]
\caption{Dominant terms in operations counts for~\cref{proto:full}
  using Paillier \footnotesize{(a value of $x$ approximates in fact
    $x+o(x)$; then ``Hash'' counts calls to the cryptographic hash
    function, ``mexp'' is for modular exponentiations, ``group'' is
    for the other arithmetic operations)}.}\label{tab:protocounts}
\adjustbox{max width=\textwidth}{\centering
\begin{tabular}{ccccccc}
\toprule
\multirow{2}{*}{Alg.} & \multicolumn{3}{c}{Server} & \multicolumn{3}{c}{Client} \\
\cmidrule(lr){2-4}\cmidrule(lr){5-7}
 & group & mexp & Hash & group & mexp & Hash\\
\midrule
\ref{alg:setup}	& $0$ & $0$ & $2d$ & $17d$ & $6d$ & $2d$	\\
\ref{alg:read}	& $0$ & $0$ & $0$  & $1$   & $1$  & $\lceil\log_2(d)\rceil$	\\
\ref{alg:update}
& $0$ & $0$ & $\lceil\log_2(d)\rceil$
& $18$ & $16$ & $2\lceil\log_2(d)\rceil$ \\
\ref{alg:update2}	& $3$ & $0$ & $0$ & $8$ & $8$ & $0$ \\
\ref{alg:veval}/\ref{alg:verif}	& $3d$ & $4d$ & $0$ & $52$ & $11$ & $0$\\
\bottomrule
\end{tabular}}
\end{table}

\begin{table*}[htbp]\centering\renewcommand{\arraystretch}{1.25}
  \caption{Private \& Dynamic, Ciphered polynomial evaluation, summarizing~\cref{alg:setup,alg:read,alg:update,alg:update2,alg:veval,alg:verif}.}\label{proto:full}
\fbox{\adjustbox{max width=\textwidth}{\centering
  \begin{tabular}{cccc}
    & Server & Communications & Client \\
    \midrule
    \multirow{6}{*}{{\Setup}}&&$\GG_1,\GG_2,\GG_T$ groups of order $p$& $P\in\Z_p[X]$, $1\leq{}d^{\circ}(P)\leq{d}$ \\
    &&pairing $e$ to $\GG_T$,&
    $s\random\Z_p{\setminus}\{0,1\}$, $\vect{\alpha},\vect{\beta}\random\Z_p^2$, $\matr{\Phi}\random\Z_p^{2{\times}2}$,\\
    &&gen. $g_1,g_2,g_T=\gen$&s.t. $(s\matr{\Phi}-I_2)\in{GL_2(\Z_p)}$\\
    &&& Let $\bar{P}(X) \gets \sum_{i=0}^dX^i({p_i}\vect{\alpha}+\matr{\Phi}^i\vect{\beta})$
    \\
 Alg.~\ref{alg:setup}   &&& $W \gets E_{\pk}(P)$, $S \gets [g_1^{s^k}]_{k=0}^{d-1}{\in}\GG_1^d$\\
    &&&$\bar{\mathcal{K}}\gets{g_T^{\bar{P}(s)}}{\in}\GG_T^2$,
    $\bar{H} \gets [g_2^{\bar{p}_i}]_{i=1}^{d}{\in}\GG_2^{2{\times}d}$\\
    &$T_W\gets\mtcreate(W)$
    & \pleftarrow{W,\bar{H},S} &
    $d_p\gets{d\mod{\varphi(p)}}$, $r_W \gets \mtrootfromleaves (W)$\\

&{\bf{Output}} : $\servstate{}  = \{\pk,\GG_{1,2,T},e,W,T_W,S,\bar{H}\}$&& $\clstate{} =\{\pk,\sk,\GG_{1,2,T},g_{1,2,T},e,s,\vect{\alpha},\vect{\beta},\matr{\Phi},\bar{\mathcal{K}},r_W,d_p\}$\\

    \midrule
    \multirow{2}{*}{{\Read}} && \pleftarrow{i} &\\
 Alg.~\ref{alg:read}   &$L_i\gets\mtuncles(i,W,T_W)$ & \prightarrow{w_i,L_i} & $r_W \checks{=} \mtrootfrompath(i,w_i,L_i)$\\
&&& {\bf{Output}} : $p_i \gets D_{\sk}(w_i)$ or {\bf reject}\\
    \midrule
    \multirow{2}{*}{{\Update}} &
& \pleftarrow{i,w'_i,if \,(i>0) \,\bar{H}'_i} & $w'_i \gets E_{\pk}(p'_i)$, $\bar{H}'_i\gets g_2^{p'_i \vect{\alpha}+\matr{\Phi}^i\vect{\beta}}$\\
  Alg.~\ref{alg:update}  &$L_i\gets\mtuncles(i,W,T_W)$ & \prightarrow{w_i,L_i} & $r_W \checks{=} \mtrootfrompath(i,w_i,L_i)$\\
    & $T_W\gets\mtupdateleaf(i,w'_i,T_W)$ && $r_W \gets \mtrootfrompath(i,w'_i,L_i)$\\
    & $w_i \gets w_i'$
      &&$\Delta \gets g_2^{(p'_i-p_i)\alpha}$, $\bar{\mathcal{K}}[j]\gets e(g_1, \Delta[j]^{s^i})\cdot\bar{\mathcal{K}}[j]$\\
&{\bf{Output}} : $\servstate{}  = \{\pk,\GG_{1,2,T},e,W,T_W,S,\bar{H}\}$&& $\clstate{} =\{\pk,\sk,\GG_{1,2,T},g_{1,2,T},e,s,\vect{\alpha},\vect{\beta},\matr{\Phi},\bar{\mathcal{K}},r_W,d_p\}$\\
&&& or {\bf{reject}}\\
    \midrule
    \multirow{1}{*}{{\deltaUpdate}} &
    If $i>0, \bar{H}'_i[j]\gets\Delta[j]\cdot\bar{H}_i[j]$ for $j=1..2$ & \pleftarrow{i,e_\delta,\Delta} &
    $e_\delta \gets E_{\pk}(\delta)$, $\Delta \gets g_2^{\delta\vect{\alpha}}$\\
 Alg.~\ref{alg:update2}   & $w_i\gets{w_i}\cdot{e_\delta}$ &&
    $\bar{\mathcal{K}}[j]\gets e(g_1;\Delta[j]^{s^i})\cdot\bar{\mathcal{K}}[j]$\\
&{\bf{Output}} : $\servstate{}  = \{\pk,\GG_{1,2,T},e,W,S,\bar{H}\}$&& $\clstate{} =\{\pk,\sk,\GG_{1,2,T},g_{1,2,T},e,s,\vect{\alpha},\vect{\beta},\matr{\Phi},\bar{\mathcal{K}},d_p\}$\\
    \midrule
    \multirow{1}{*}{{\Eval/\Verif}}& Form
    $x\gets\Transpose{[1,r,r^2,\ldots,r^d]}$ &
    \pleftarrow{r} & For $r\in\Z_p$ s.t. $(r\matr{\Phi}-I_2)\in{GL_2(\Z_p)}$\\
 \multirow{2}{*}{Alg.~\ref{alg:veval}/\ref{alg:verif}}    &$\zeta\gets\Transpose{W}\boxdot{x}$&&$\vect{c}\gets((r\matr{\Phi})^{d_p+1}-I_2)(r\matr{\Phi}-I_2)^{-1}\vect{\beta}$\\
  &$\bar{\xi}[j]\gets\prod_{i=1}^d \prod_{k=0}^{i-1}
    e(S_{i-k-1};\bar{H}_i[j])^{x_k}$ for $j=1..2$&
    \prightarrow{\zeta,\bar{\xi}} & $\bar{\xi}[j]^{s-r}g_T^{{D_{\sk}(\zeta)}\vect{\alpha}[j]+\vect{c}[j]}\checks = \bar{\mathcal{K}}[j]$ for $j=1..2$\\

&&& {\bf Output} : $D_{\sk}(\zeta)$ or {\bf reject}\\
  \end{tabular}
}}
\end{table*}

\section{Proofs of the propositions and theorems}\label{app:proofs}
Now, we give the proofs of the propositions
in~\cref{sec:cipher,sec:dynamic} and of our main theorems for our
private and dynamic ciphered polynomials evaluation protocol
and our low Server storage and audit complexity DPoR.

\cKZGverifiable*
\begin{proof}
{\bf Correctness}. First, $\zeta=\Transpose{W}\boxdot{x}=\prod_{i=0}^d
E(p_i)^{(r^i)}=E(P(r))$.
Then, second,
$\xi=\Transpose{H}\odot{x}=\prod_{k=0}^{d-1} g_2^{T_{k,P}(s)r^k}=g_2^{Q_P(s,r)}$,
by~\cref{prop:bivariate}.
Therefore, the verification is
that $g_T^{Q_P(s,r)(s-r)+P(r)}\checks{=}g_T^{P(s)}$ and this is guaranteed
by~\Cref{eq:KZG}.

{\bf Soundness}.
Let
$\left\langle{}g_2,g_2^s,g_2^{s^2},\ldots,g_2^{s^t}\right\rangle\in\GG_2^{t+1}$
be a t-BSDH instance and suppose that there exists an attack to the
\Audit protocol.

Let  $[p_0,\ldots,p_t]\random\Z_p^{t+1}$ for a degree $t$
polynomial and $d=t$.
Then compute directly $W=E(P)$, $T_{k,P}=\sum_{i=k+1}^t p_i
Y^{i-k-1}=\sum_{j=0}^{t-1-k}t_{k,j}Y^j$
and homomorphically compute:
\[\mathcal{K}=e\left(g_1;\left\langle{}g_2,g_2^s,g_2^{s^2},\ldots,g_2^{s^t}\right\rangle\odot[p_0,\ldots,p_t]\right),\]
together with $H=[h_k]$, where
$h_k=\left\langle{}g_2,g_2^s,g_2^{s^2},\ldots,g_2^{s^{t-1-k}}\right\rangle\odot[t_{k,0},\ldots,t_{k,t-1-k}]$.
These inputs are indistinguishable from a generic setup of the protocol
of~\cref{proto:cKZG} and can thus be given to its attacker.

Finally, select a random evaluation point $r$ and compute
$(\zeta,\xi)$.
The supposition is that an attacker of the \Audit{} part of the
protocol can get $(\zeta',\xi')$, with some advantage, such that
$(D(\zeta'),\xi')\neq(D(\zeta),\xi)$, even though both would be
passing the verification.
Now, on the one hand, if $D(\zeta')=D(\zeta)$, then $\xi\neq{\xi'}$
and it must be that
$e(g_1^{s-r};\xi)g_T^{D(\zeta)}=\mathcal{K}$ and
$e(g_1^{s-r};\xi')g_T^{D(\zeta)}=\mathcal{K}$.
Therefore, if $r\neq{s}$, then
$e(g_1^{s-r};\xi)=e(g_1^{s-r};\xi')$ contradicts the fact that
$\xi\neq\xi'$; so $r=s$, and the secret
can be exposed.
On the other hand, if $D(\zeta')\neq{D(\zeta)}$, then it means that we
must have the equality
$(e(g_1;\xi)/(e(g_1;\xi'))^{s-r}=g_T^{D(\zeta')-D(\zeta)}$ and therefore:
\(
\left(\frac{e(g_1;\xi)}{e(g_1;\xi')}\right)^{\frac{1}{D(\zeta')-D(\zeta)}}=g_T^{\frac{1}{s-r}}
\).
This proves that the adversary would solve the t-BSDH
$\left\langle{}{-}r,\gen^{\frac{1}{s-r}}\right\rangle$ challenge with
the same advantage.
\myqed\end{proof}

From this proof, one can see that using a decipherable partially
homomorphic function for the coefficients of $P$ is required for the
soundness (otherwise one could not compute the exponentiation on
$\xi/\xi'$).

\csClear*
\begin{proof}
\textbf{Correctness.}
First, \eqref{eq:merkle} gives the correctness of \Read.
For \Update, \eqref{eq:mtupl} provides the correctness of the hash tree.
Then, with $\delta=p'_i-p_i$, the new polynomial is
$P'(s)=P(s)+\delta{s^i}$, so that the key is updated as
${\mathcal{K}}_1'={\mathcal{K}}_1\cdot{e(g^{\delta{s^i}};g)}$.
Now for the evaluation, first,
$\xi=\prod_{i=1}^d\prod_{k=0}^{i-1}S_{i-k-1}^{p_ix_k}=g^{\sum\sum{s^{i-k-1}p_ix_k}}=g^{Q_P(r,s)}$
and, second, we have that:
\begin{multline*}e(\xi;{\mathcal{K}}_2/g^r)\gensym^{\zeta}=e(\xi;g^{s-r})\gensym^{P(r)}
=\\\gensym^{Q_P(r,s)(s-r)+P(r)}=\gensym^{P(s)}.\end{multline*}
Hence we see that
$e(\xi;{\mathcal{K}}_2/g^r)\gensym^{\zeta}={\mathcal{K}}_1$ and,
therefore, the protocol is correct.

\textbf{Soundness.}
First for the \Read/\Update~parts. Suppose an attacker can provide
$p'_i\neq{p_i}$ that passes the Merkle root check.
This would violate the soundness property of \cref{eq:mtsound},
which is derived from the collision resistance of the underlying hash
function.

Second, for the \Eval/\Verif~parts.
Let
$\left\langle{}g,g^s,g^{s^2},\ldots,g^{s^t}\right\rangle\in\GG^{t+1}$
be a t-BSDH instance.
For the setup phase, just set $d=t$ and then randomly select
$[p_0,\ldots,p_t]\random\Z_p^{t+1}$.
Then set
$S=\left\langle{}\GG,g,g^s,g^{s^2},\ldots,g^{s^t}\right\rangle$
and
\[\mathcal{K}_1=e\left(\left\langle{}g,g^s,g^{s^2},\ldots,g^{s^t}\right\rangle\odot[p_0,\ldots,p_t];g\right).\]
These inputs are indistinguishable from generic inputs to the protocol
of~\cref{protoDynClear}.
For any number of update phase, randomly select $p'_i$ (or $\delta$), receive
$p_i$ and $L_i$ from the Server, compute
${\mathcal{K}}_1'={\mathcal{K}}_1e(S_i^{\delta};g)$ and refresh $r_p$.
Finally, select a random evaluation point $r$, compute $(\zeta,\xi)$ and
call an attacker of the \Eval~part of the protocol to get
$(\zeta',\xi')$ such that $(\zeta',\xi')\neq(\zeta,\xi)$, even
though both are passing the verification.
If $\zeta'=\zeta$, then as $\xi\neq\xi'$ it must be that $r=s$ and the
secret is revealed;
otherwise, $\zeta'\neq{\zeta}$
and we have both $e(\xi';{\mathcal{K}}_2/g^r)e(g;g )^{\zeta'}=
{\mathcal{K}}_1$, on the one hand,
and ${\mathcal{K}}_1=e(\xi;{\mathcal{K}}_2/g^r)e(g;g )^{\zeta}$, on
the other hand.
This gives  $e(\frac{\xi'}{\xi};g^{s-r})=e(g^{\zeta-\zeta'};g)$ and thus
$e\left((\frac{\xi'}{\xi})^{s-r};g\right)=e(g^{\zeta-\zeta'};g)$.
Finally, we have that:
\(
e\left(\frac{\xi}{\xi'};g\right)^{\frac{1}{\zeta'-\zeta}}=\gensym^{\frac{1}{s-r}}
\).
This proves that the adversary would solve the t-BSDH
$\left\langle{}{-}r,\gensym^{\frac{1}{s-r}}\right\rangle$ challenge with
the same advantage.
\myqed\end{proof}

\compClear*
\begin{proof}
The setup phase requires the Client to perform one polynomial
evaluation and $d$ exponentiations for $O(d)$ arithmetic operations,
together with the computation of the Merkle tree on both sides, for
$O(d)$ hashing operations.

  For the update phase, the Client computes the root of the
  Merkle tree from the new value $p_i+\delta$ and the path $L_i$ given
  by the Server in \bigO{\log(d)}.
  She also has to compute an exponentiation and a product in
  $\Z_p[X]$, this is in~\bigO{1}.

For the verification phase, communications are just $3$ group
elements. The Client work is only $2$ pairing and $2$
exponentiations and $1$ product.

Now for the Server.
First, computing $\zeta$ is $d+1$  homomorphic
multiplications and $d$  additions.
Second, the Server has to compute $\xi=\prod_{i=1}^d \prod_{k=0}^{i-1}
S_{i-k-1}^{p_ix_k}=\prod_{i=1}^d  \left (\prod_{k=0}^{i-1}
  S_{i-k-1}^{r^k} \right )^{p_i}$.
Therefore,
one can use a Horner-like prefix
computation~\cite{Kahan:1999:divdiff}: consider $t_0=1$, and
$t_i=S_{i-1}\cdot{t_{i-1}^r}$, then $t_1=S_0$, $t_2=S_1S_0^r$ and therefore
$t_i=S_{i-1}(S_{i-2}\ldots(S_2(S_1S_0^r)^r)^r\ldots)^r=\prod_{k=0}^{i-1}S_{i-k-1}^{r^k}$.
Thus one can use the following~\cref{alg:prefixdiff} to compute $\xi$.

\begin{algorithm}[!ht]\caption{Homomorphic linear prefix evaluation of the
    difference polynomial}\label{alg:prefixdiff}
\begin{algorithmic}[1]
\REQUIRE $r$, $[S_0,\ldots,S_{d-1}]$, $[p_1,\ldots,p_d]$.
\ENSURE $\xi=\prod_{i=1}^d  \left (\prod_{k=0}^{i-1}  S_{i-k-1}^{r^k} \right )^{p_i}$.
\STATE $\xi=1$; $t=1$;
\FOR{$i=1$ \TO $d$}
\STATE $t\leftarrow{S_{i-1}\cdot{t^r}}$;
\hfill\COMMENT{$t_i=\prod_{k=0}^{i-1}S_{i-k-1}^{r^k}$}
\STATE $\xi\leftarrow\xi\cdot t^{p_i}$.
\ENDFOR
\RETURN $\xi$.
\end{algorithmic}
\end{algorithm}

Computing $\xi $ then requires at most $2d$ exponentiations and $2d$
multiplications.\myqed
\end{proof}

\fullVESPoTHM*
\begin{proof}
First of all, we have that:
\[\left \{\begin{array}{l} H'_i=H_i. \Delta\\w'_i=w_i.e_{\delta}  \end{array} \right .
\Leftrightarrow \left \{\begin{array}{l} \Delta= H'_i.H_i^{-1}\\e_{\delta}=w'_i.w_i^{-1} \end{array}\right.\]
Therefore, it is equivalent to consider the protocols using only
\cref{alg:update} or only \cref{alg:update2} or any combinations of
both. Also, the \Read part is identical to that
of~\cref{protoDynClear} and so are the associated security proofs.

{\bf Correctness.}
For the \Update~operation, %
$\bar{P}'(s)=\bar{P}'(s\cdot{I_2})=(p'_i-p_i) s^i\vect{\alpha}+\bar{P}(s\cdot{I_2})$ and
$e(g_1;g_2^{\bar{P}'(s\cdot{I_2})[j]})=e(g_1;g_2^{s^i(p'_i-p_i)\vect{\alpha}[j]})\allowbreak
\cdot{e(g_1;g_2^{\bar{P}(s\cdot{I_2})[j]})} = e(g_1;  (\bar{H}'_i[j]\cdot\bar{H}_i[j]^{-1} )^{s^i})\cdot{g_T^{\bar{P}(s)[j]}} =
e(g_1;  (\bar{H}'_i [j]\cdot\bar{H}_i [j]^{-1} )^{s^i})\cdot\bar{\mathcal{K}}[j]$
for $j=1..2$.
Finally, We use the left hand side of~\cref{prop:bivariate} and
\Cref{eq:KZG}.
Applied to $\bar{P}$, this is:
\(\bar \xi=\prod_{i=1}^d \prod_{k=0}^{i-1}
e(S_{i-k-1};\bar{H}_i)^{x_k}=\prod_{i=1}^d \prod_{k=0}^{i-1}
e(g_1^{s^{i-k-1}};{g_2}^{\bar P_i})^{r^k}\) so that
\(\bar \xi=\gen^{Q_{\bar P}(s\cdot{I_2},r\cdot{I_2})}\).
Denote by $G(Z)=\frac{Z^{d+1}-1}{Z-1}$.
Now $\bar{P}(X)={P(X)}\vect{\alpha}+{G(X\matr{\Phi})}\vect{\beta}$, then $\vect{c}={G(r\Phi)}\vect{\beta}={G(r\cdot{I_2}\Phi)}\vect{\beta}$ and
thus $\bar{P}(r\cdot{I_2})={D(\zeta)}\vect{\alpha}+\vect{c}={P(r)}\vect{\alpha}+\vect{c}$.
Therefore the verification in \Eval/\Verif~is indeed that
\(g_T^{Q_{\bar{P}}(s\cdot{I_2},r\cdot{I_2})(s-r)+\bar{P}(r\cdot{I_2})}
\checks{=}g_T^{\bar{P}(s\cdot{I_2})}=g_T^{\bar{P}(s)}\).

{\bf Complexity bounds.}
In terms of storage, apart from the public/private key pair and the
groups, the Client just has to store nine elements mod
$p$, that is $s$, $\vect{\alpha}\neq[0,0]$, $\vect{\beta}$, and
$\matr{\Phi}$, together with two group elements, $\bar{\mathcal{K}}$;
the Server has to store the polynomial ciphered thrice, the ciphered
powers of $s$ and the Merkle tree for the ciphered polynomial:
all this is $O(d)$.
In terms of communications, during the \Update~phase the Client sends
one index and three group elements, while receiving one group element
and the list of its $\log(d)$ uncles. During the \Eval/\Verif~phase, only
four elements are exchanged.
Finally, in terms of computations,
the Server performs $O(d)$ operations for the Merkle tree generation
at \Setup; fetches $O(\log(d))$ uncles at \Update; and $O(d)$
(homomorphic) operations at \Verif, thanks to~\cref{alg:prefixdiff}.
For the Client,
\Update~requires $O(\log(d))$ arithmetic operations to check the uncles
and to compute the
exponentiation $s^i$ and $\matr{\Phi}^i$, together with a
constant number of other arithmetic operations, independent of the
degree. Similarly, computing $(r\matr{\Phi})^{d_p+1}$ also requires
$O(min\{\log(d),\log(p)\})$ classical arithmetic operations thanks
to~\cref{alg:matgeomsum}. This is $\bigO{1}$ if $p$ is considered constant and the
rest is also a constant number of operations that are independent of
the degree.

{\bf Soundness.}
Let
$\left\langle{}g_1,g_1^s,g_1^{s^2},\ldots,g_1^{s^t}\right\rangle\in\GG_1^{t+1}$
be a t-BSDH instance.
For the setup phase, randomly select $\vect{\alpha},\vect{\beta},\matr{\Phi}$
and $[p_0,\ldots,p_t]$.
Then compute $W=E(P)$, $\bar{H}=g_2^{\bar{P}}$, and let
$S=\left\langle{}g_1,g_1^s,g_1^{s^2},\ldots,g_1^{s^t}\right\rangle$.
Finally homomorphically compute:
\[\bar{\mathcal{K}}=e\left(\left\langle{}g_1,g_1^s,g_1^{s^2},\ldots,g_1^{s^t}\right\rangle\odot[\bar{p}_0,\ldots,\bar{p}_t];g_2\right).\]
These inputs are indistinguishable from random inputs to the protocol
of~\cref{proto:full}.
For any number of update phases, randomly select $i$ and $p'_i$ and compute
$w'_i \gets E(p'_i)$, $\bar{H}'_i\gets g_2^{p'_i \vect{\alpha}+\matr{\Phi}^i\vect{\beta}}$ and
$\Delta=g_2^{\delta\vect{\alpha}}$.
Also compute $\mathcal{K}'={e(S_i^{(p'_i-p_i)\vect{\alpha}};g_2)}\cdot\mathcal{K}$.
Finally, select a random evaluation point $r$, compute $(\zeta,\bar{\xi})$ and
call an attacker of the \Eval{} part of the protocol to get
$(\zeta',\bar{\xi}')$ such that $(D(\zeta'),\bar{\xi}')\neq(D(\zeta),\bar{\xi})$, even
though both are passing the verification.
This means, again,
that if, on the one hand,
$D(\zeta')=D(\zeta)$, then $\bar{\xi}^{(s-r)}=\bar{\xi}'^{(s-r)}$ with
$\bar{\xi}\neq\bar{\xi}'$. Therefore $s=r$ and the secret is exposed.
If, on the other hand, $D(\zeta')\neq{D(\zeta)}$
then, as $\vect{\alpha}\neq[0,0]$, set $j\in\{1,2\}$ such that
$\alpha[j]\neq{0}$ and we have again:
\(
\left(\frac{\bar{\xi}[j]}{\bar{\xi}'[j]}\right)^{\frac{1}{\vect{\alpha[j]}(D(\zeta')-D(\zeta))}}=\gen^{\frac{1}{s-r}}
\).
This proves that the adversary would solve the t-BSDH
$\left\langle{}{-}r,\gen^{\frac{1}{s-r}}\right\rangle$ challenge.

{\bf Privacy.} We show that the protocol is hiding both $p_i$ and $\bar{p}_i$.

For $\bar{p}_i$ first.
Let $B=g_2^b$ be a DLOG instance.
For the setup phase, randomly select $s,\vect{\alpha},\matr{\Phi},d$,
$[p_0,\ldots,p_d]$ and two non-zero elements
$b_1,b_2\in\Z_p^*$.
Then compute $W=E(P)$, $\bar{H}_i=g_2^{\vect{\alpha}{p_i}}B^{\matr{\Phi}^i\Transpose{[b_1,b_2]}}$,
$S=\left\langle{}g_1,g_1^s,g_1^{s^2},\ldots,g_1^{s^t}\right\rangle$, and
$\bar{\mathcal{K}}=e(g_1;g_2^{\vect{\alpha}{P(s)}}B^{G(s\matr{\Phi})\Transpose{[b_1,b_2]}})$.
These inputs are indistinguishable from random inputs to the protocol
of~\cref{proto:full}.
For any update phase, randomly select $i$ and $p'_i$ and compute
$w'_i \gets E(p'_i)$, $\bar{H}'_i\gets g_2^{p'_i \vect{\alpha}+\matr{\Phi}^i\vect{\beta}}$
and $\Delta=g_2^{\vect{\alpha}\delta}$.
Also compute
$\bar{\mathcal{K}}'[j]=e(g_1;\Delta^{s^i})\cdot\bar{\mathcal{K}}[j]$
for $j=1..2$.
Such updates are indistinguishable from random updates to the protocol
of~\cref{proto:full}.
Randomly select any number of evaluation points $r$ and run the
associated \Eval{} phases, randomly
alternated with \Update{} phases.
Now, if an attacker can find from this transcript one
coefficient $\bar{p}_i[j]$ for $j\in\{1,2\}$, then compute
$b=(\bar{p}_i[j]-p_i\vect{\alpha}[j])/(\Phi^i\Transpose{[b_1,b_2]})[j]$
and the DLOG is revealed.

For $p_i$, we proceed with a sequence of two indistinguishable games.
  Under DLM security, cf.~\cref{def:DLM}, the
  parameter $\bar{H_i}$, or more precisely, the pair
  $(E(p_i),g_2^{{p_i}\vect{\alpha}+\Phi^i\vect{\beta}})$, is indistinguishable from
  $(E(p_i),g_2^{{p_i}\vect{\alpha}+\Gamma_i})$ for some random
  $2$-dimensional vectors~$\Gamma_i$.
  Therefore the protocol of~\cref{proto:full} is indistinguishable,
  as a whole, from the same protocol where $\matr{\Phi}^i\vect{\beta}$ is
  everywhere replaced by $\Gamma_i$, and $\vect{c}$ is (now inefficiently) computed
  as $\sum  r^i\Gamma_i$.
Now we prove that the latter is hiding.
Let $Z=E(\omega)$ be the cipher of a secret~$\omega$.
Randomly select $d$ and $[u_0,\ldots,u_d]\random\Z_p^{d+1}$.
Compute $W_i=Z\cdot{}E(u_i)=E(\omega+u_i)$.
Randomly select $\vect{\alpha}$ and $h_i$ (so that
$\Gamma_i=\log_{g_2}(h_i)-(\omega+u_i)\vect{\alpha}\in\Z_p^2$ exists, but remains unknown) for
$i=1..d$.
Randomly select $s$ and compute $\bar{\mathcal{K}}=e(g_1;H\odot[1,s,\ldots,s^d])$.
For any number of updates, randomly select $p'_i$, compute
$w'_i \gets E(p'_i)$
so that
$\delta=p'_i-p_i=(\omega+u'_i)-(\omega+u_i)=u'_i-u_i$. Thus update
$u'_i\gets\delta+u_i$ and, therefore, compute $\Delta=g_2^{\delta\vect{\alpha}}$ and
$\bar{\mathcal{K}}'[j]=e(g_1;\Delta[j]^{s^i})\cdot\bar{\mathcal{K}}[j]$
for $j=1..2$.
Alternatively run such updates with random \Eval{} phases;
all this is indistinguishable from a normal transcript of the protocol.
Now if from this transcript an attacker could find one $p_j$, then
compute $\omega=p_j-u_j$ and the encrypted value would
be revealed.\myqed
\end{proof}

\DPorTHM*
\begin{proof}

For the sake of simplicity, we here only consider the case $t=1$, that is
a single control vector.

{\bf Correctness}.
Assume that all the parties are honest.
After each update phase, thanks to the correctness of the Merkle hash
tree algorithms, we have $\Transpose{\ww}=E(\Transpose{\uu}M)$ and $\bar{\mathcal{K}}=e(g_1;g_2^{\bar{\vv} \sigma})$.
To see this, suppose a modification of the database at indices $i$ and
$k$, and let $\MM'=\MM+(\MM'_{ik}-\MM_{ik})\mathcal{E}_{ik}$ where $\mathcal{E}_{ik}$ is
the single entry matrix with $1$ at position $(i,k)$.
We have
$\Transpose{\uu}\MM'=\Transpose{\uu}\MM+\Transpose{\uu}(\MM'_{ik}-\MM_{ik})\mathcal{E}_{ik}=\Transpose{\uu}M+\gamma^i{e_k}(\MM'_{ik}-\MM_{ik})
$ where ${e_k}$ is the $k$-th canonical vector.
Thus, $\vv'= \vv+\gamma^i(\MM'_{ik}-\MM_{ik}) e_k=\vv+\delta e_k$
satisfies $\Transpose{\uu}\MM'=\Transpose{\vv'}$. Only the $k$-th coefficients are different
in $\vv$ and $\vv'$, and in $\ww$ and $\ww'$ as well. For the latter,
$\ww'_k=E(\vv'_k)=E(\vv_k+\delta)=E(\vv_k)E(\delta )=\ww_k E(\delta)$. The Server
thus computes $\ww'$ such that $\ww'=E(\Transpose{\uu}\MM')$.
Moreover, for $j=1..2$, $\bar{\vv}'[j]=\bar{\vv}[j]+  \delta\alpha[j]e_k$, so that,
similarly,
$\bar{H}'_k[j]=\Delta[j]\bar{H}_k[j]$ with $\Delta=g_2^{\delta\alpha}$,
and
$\bar{\mathcal{K}'}[j]=e(g_1;g_2^{\bar{\vv}'[j]\sigma})=e(g_1;g_2^{\bar{\vv}[j]\sigma}g_2^{\delta\alpha[j]e_k\sigma})=\bar{\mathcal{K}}[j]\cdot{e(g_1;g_2^{\delta\alpha[j]s^k})}=\bar{\mathcal{K}}[j]\cdot{e(g_1;\Delta[j]^{s^k})}$.
Now, concerning the audit phase.
Since we consider the polynomial evaluation as a dotproduct, the application of Proposition \ref{prop:bivariate} to our notations gives:
\(
(s-r) \left(\sum_{i=1}^{n-1} \sum_{k=0}^{i-1} \bar v_i s^{i-k-1} r^k\right)+ \sum_{i=0}^{n-1} \bar v_i r^i= \sum_{i=0}^{n-1} \bar v_i s^i\).
Thus we have:
\(\bar \xi=\prod_{i=1}^{n-1} \prod_{k=0}^{i-1}
e(S_{i-k-1};\bar{H}_i)^{x_k}\)
so that also  \(\bar \xi=\prod_{i=1}^{n-1} \prod_{k=0}^{i-1}
e(g_1^{s^{i-k-1}};{g_2}^{\bar v_i})^{r^k}  =\gen^{   \sum_{i=1}^{n-1}
  \sum_{k=0}^{i-1} \bar v_i s^{i-k-1} r^k  }\).

Moreover, $\alpha D(\zeta) +c= \alpha \vv x+ ((r\matr{\Phi})^{d+1}-I_2)(r\matr{\Phi}-I_2)^{-1}\vect{\beta}= \alpha \vv x+\sum_{k=0}^{n-1}r^k\Phi^k \beta=\bar{\vv} x$.
Thus we have that
\(\bar{\mathcal{K}}[j]
=
g_T^{\bar{\vv}[j]\sigma}
=
g_T^{(s-r)(\sum_{i=1}^{n-1}\sum_{k=0}^{i-1} \bar v_i[j] s^{i-k-1}
  r^k)+\bar{\vv}[j] x}
\). From the setup, this means that
\(\bar{\mathcal{K}}[j]
=
\bar{\xi}[j]^{s-r}g_T^{{D(\zeta)}\alpha[j]+c[j]}
\)
and, finally, $\Transpose{\uu}y=\Transpose{\uu}\MM x=\Transpose{\vv}x$.

{\bf Soundness}.
An attacker to the protocol must provide $(y',\zeta',\xi')$ such that
$(y',\zeta',\xi')\neq(y,\zeta,\xi)$, but still $\Transpose{\uu}y'=D_{\sk}(\bar\zeta')$,
with a non negligible advantage~$\epsilon$.
There are two cases: if
$(D_{\sk}(\bar\zeta'),\xi')\neq(D_{\sk}(\bar\zeta),\xi)$ then the attacker had to break
the polynomial evaluation; otherwise, it must be that
$\Transpose{\uu}y'=\Transpose{\uu}y$ with $y'\neq{y}$.

For the first case, \cref{thm:full} assesses the security of the
polynomial evaluation.
For the second case, we consider $T=E_{\pk}(t)$ the cipher of a secret
$t$ by the homomorphic scheme.
Here, we use again the fact that the protocol
of~\cref{protoPor} is indistinguishable as a whole from the same
protocol where, within the polynomial evaluation of, $\Phi^i\beta$
is everywhere replaced by a random~$\Gamma_i$.
Further, this is indistinguishable from a third protocol where, at
each \Write{} of index $i$, a new $\Gamma'_i$ is also randomly redrawn and
replaces $\Gamma_i$ in the Client state.
We thus continue the proof with this third game setting.
Now, using $\vect{e_\ell{}}$ the $\ell{}$-th canonical vector of $\Z_p^m$,
we can (abstractly) consider $\tilde{\uu}=\uu+t\vect{e_\ell{}}$ and
$\Transpose{\tilde{\vv}}=\Transpose{\tilde{\uu}}M=(\Transpose{\uu}+t{\vect{e_\ell{}}})M={\vv}+tM_{\ell{},*}$.
Then, for the \Init{} phase, we can randomly select $m$, $n$ and
$\ell{}\leq{m}$.
Then also $M\in\Z_p^{m\times{n}}$, $\uu\in\Z_p^m$, and compute
$\Transpose{\vv}=\Transpose{\uu}M$.
From this, compute $\ww_k=E(v_k)T^{M_{\ell{}k}}=E(v_k+tM_{\ell{}k})=E(\tilde{v}_k)$.
We also randomly select $s,\alpha$ and $\bar{H}_k$ (so that
$\Gamma_k=\log_{g_2}(\bar{H}_k)-\tilde{v}_k\alpha$ exists, but is unknown).
For any \Write{} phases, compute $\ww'_k=\ww_k T^{M'_{\ell{}k}-M_{\ell{}k}}$ and
select randomly a $\Delta$ (so that $\bar{H}'_k[j]=\bar{H}_k[j] \Delta[j]$ for $j=1..2$ now correspond
to a new $\Gamma_k'=\log_{g_2}(\bar{H}'_k)-\tilde{v}'_k\alpha$ still unknown).
Finally, the attacker provides a vector $y'$ such that both
$\Transpose{\tilde{\uu}}(y'-y)=0$ and $y'\neq{y}\mod{p}$.
Since $\ell{}$ is randomly chosen from $1..m$, the probability that the
vectors are distinct at index $\ell{}$, in other words that
$y'_\ell{}\neq{y_\ell{}}\mod{p}$, is at least $1/m$.
If this is the case, then, denoting $z=y'-y$, we have that
$z_\ell{}\neq{0}\mod{p}$.
Now, $\Transpose{\tilde{\uu}}z=0$ implies that
$\Transpose{\uu}z+tz_\ell{}=0$ so that the secret can be computed as
$t\equiv{-z_\ell{}^{-1}\cdot(\Transpose{\uu}z)}\mod{p}$ and the homomorphic
cryptosystem is subject to an attack with advantage $\epsilon/m$.
\myqed\end{proof}

\section{Paillier's cryptosystem as the linearly homomorphic primitive}\label{app:paillier}
Paillier's homomorphic system works modulo some RSA composite
number~$N$.
Now it is possible to use it to compute evaluations modulo a different
$m$ (for instance a prime), provided that $m$ is small enough:
consider the modulo $m$ operations to be over $\Z$, perform the
homomorphic operations, and use $m$ only to reduce \emph{after}
decryption. This is illustrated in~\cref{alg:paillierdp}.

\begin{algorithm}[!ht]
  \caption{Homomorphic modular polynomial evaluation with a different
    Paillier modulus}\label{alg:paillierdp}
  \begin{algorithmic}[1]
    \REQUIRE An integer $r\in[0..m-1]$;
    \REQUIRE A Paillier cryptosystem $(E,D)$ with modulus $N>(m-1)^2$.
    \REQUIRE $(E(p_0),\ldots,E(p_d))\in\Z_N^{d+1}$, such that
    $\forall{i},p_i\in[0..m-1]$ and $d<\frac{N}{(m-1)^2}-1$.
    \ENSURE $c\in\Z_N$ such that
    $D(c)\mod{m}\equiv{P(r)}\mod{m}\equiv\sum_{i=0}^d p_ir^i\mod{m}$.
    \STATE let $x_0=1$ and $c_0=E(p_0)$;
    \FOR{$i=1$ \TO $d$}
    \STATE $x_i\leftarrow{x_{i-1}\cdot{r}\mod{m}}$;\hfill\COMMENT{Now $x_i\in [0..m-1]$}
    \STATE $c_i\leftarrow{c_{i-1}}\cdot{E(p_i)^{x_i}}$;\hfill\COMMENT{Now $c=E(\sum_{k=0}^{i}p_ix_i)$}
    \ENDFOR
    \RETURN $c=c_d$.
  \end{algorithmic}
\end{algorithm}
\begin{lemma}
  \cref{alg:paillierdp} is correct.
\end{lemma}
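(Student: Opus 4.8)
The plan is to unwind the loop, tracking two invariants, and then argue that the Paillier ciphertext never overflows modulo $N$, so that decryption recovers the desired integer sum exactly before reduction modulo $m$.

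First I would prove, by induction on $i$, that the integer $x_i$ computed in line~3 satisfies $0 \le x_i \le m-1$ and $x_i \equiv r^i \pmod{m}$. The base case is $x_0 = 1 = r^0$, and the inductive step is $x_i \equiv x_{i-1}\, r \equiv r^{i-1} r = r^i \pmod{m}$, the range being guaranteed by the explicit reduction modulo $m$ in line~3. Second, by the homomorphic identities of~\cref{homo:addmul} — product of ciphertexts adds plaintexts, exponentiation of a ciphertext by a cleartext multiplies the plaintext — the value returned is $c = \prod_{i=0}^{d} E(p_i)^{x_i}$ (with $x_0 = 1$), which decrypts to $D(c) \equiv \sum_{i=0}^{d} p_i x_i \pmod{N}$.

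Third, and this is the only delicate point, I would upgrade this congruence to an equality of integers. Each $p_i$ and each $x_i$ lies in $[0,m-1]$, so every term $p_i x_i$ is at most $(m-1)^2$; summing the $d+1$ terms gives $0 \le \sum_{i=0}^{d} p_i x_i \le (d+1)(m-1)^2 < N$, where the last inequality is exactly the hypothesis $d < N/(m-1)^2 - 1$. Hence no reduction modulo $N$ takes place and $D(c) = \sum_{i=0}^{d} p_i x_i$ as integers.

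Finally, reducing this integer modulo $m$ and invoking the first invariant, $D(c) \bmod m \equiv \sum_{i=0}^{d} p_i r^i \equiv P(r) \pmod{m}$, which is the output specification. The main (and essentially only) obstacle is the third step: confirming that the stated bound on $d$ is precisely what rules out wrap-around modulo the RSA modulus $N$; everything else is a routine unwinding of the loop together with the homomorphic properties of $E$.
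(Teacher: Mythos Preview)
Your proof is correct and follows exactly the paper's approach: bound each $p_i x_i$ by $(m-1)^2$, use the hypothesis on $d$ to ensure the integer sum $\sum_{i=0}^d p_i x_i$ is strictly below $N$ so that no wrap-around occurs modulo $N$, and then reduce modulo $m$ using $x_i\equiv r^i\pmod m$. The paper's version is terser (it leaves the loop invariant on $x_i$ and the homomorphic unwinding implicit), but the argument is the same.
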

\begin{proof}
If $0\leq{p_i}\leq(m-1)$, then as $x_i\equiv{r^i}\mod{m}$ is considered as an integer
between $0$ and $m-1$, then $0\leq\sum_{i=0}^{d}p_i
x_i\leq{(d+1)(m-1)^2}<N$ by the constraints on $d$ and $N$.
Therefore $\sum_{i=0}^{d}p_ix_i\mod{N}=\sum_{i=0}^{d}p_ix_i\in\Z$ and
now $D(c)\mod{m}=\sum_{i=0}^{d}p_ix_i\mod{m}\equiv{P(r)}$.
\myqed\end{proof}

\section{Parallel prefix-like algorithm for the
  Server}\label{app:parprefix}
We here provide the parallelization we used for the Server audits in
our experiments.
For the DPoR, the matrix-vector product part was already parallelized
in~\cite[Table~6]{Anthoine:hal-02875379}, a Server auditing the $1$TB
database in a few minutes.
For the polynomial part, as the dimensions become more rectangular, as
we can see in \cref{table:results}, the Server's polynomial part is
sometimes not negligible anymore, thus also benefits from some
parallelization.
For this, we would need to parallelize both the homomorphic
dot-product and the Horner-like pairings.
On the one hand, the former operations, line~\ref{lin:paillier:veval}
in~\cref{alg:veval}, can be blocked in independent exponentiations
and final multiplications in a binary tree.
On the other hand, for the latter operations, a standard ``baby steps
/ giant steps'' approach can be employed for the iteration of
lines~\ref{lin:begfor:veval}-\ref{lin:endfor:veval}
in~\cref{alg:veval}:
\begin{itemize}[leftmargin=2\labelsep]
\item First, for steps of size $k$, compute $t^{r^k}$, then
  $t^{r^{kj}}$ for $j=1..(d/k)$ as a parallel prefix; then iterates the
  multiplications by the coefficients of $S$ in parallel for the $d/k$
  blocks.
\item Second, then all the pairings could be computed in parallel and
  their final multiplications performed again with a binary tree.
\end{itemize}

This is exposed in~\cref{alg:parserver}.

\begin{algorithm}[htbp]\caption{Parallel Server \Eval}\label{alg:parserver}
\begin{algorithmic}[1]
\REQUIRE Group order $p$, polynomial degree $d$, evaluation point $r$
and vectors $W$, $S$, $\bar{H}[j]$, all as in~\cref{alg:veval}.
\REQUIRE Cutting parameter $q$ (e.g. can be the number of threads).
\ENSURE SERVER $\zeta$, $\bar{\xi}[j]$ for $j=1..2$.
\STATE Let $(b,\eta)\in\N^2$ s.t. $d+1=bq+\eta$, with $0\leq{\eta}<q$;
\STATE Set $b_k\gets\begin{cases}
  k(b+1)&k=0..(\eta-1)\\k{b}+\eta&k=\eta..q\end{cases}$\hfill\COMMENT{$q$ blocks of size $b+1$ or $b$}
\STATEx\COMMENT{\underline{PHASE A: $r^i\mod{p}$, for $i=0..d$}}
\STATE $\rho_0\gets{1}$, $\rho_1\gets{r}$,$i\gets{1}$;
\WHILE{$i\leq{d}$}\hfill\COMMENT{$\lceil\log_2(d)\rceil$ parallel steps}
\ParFor{$k=1..min(i;d-i)$}
\STATE $\rho_{i+k}\gets\rho_i\cdot{\rho_k}\mod{p}$;
\EndParFor
\STATE $i\gets{2i}$;
\ENDWHILE
\STATEx\COMMENT{\underline{PHASE B: $\zeta=\Transpose{W}\boxdot{x}=\prod_{i=0}^{d}w_i^{(r^i\mod{p})}$}}
\ParFor{$k=1..q$}\hfill\COMMENT{$q$ blocks of size $b$ or $b+1$
  in parallel}
\STATE\label{lin:zetak} $\zeta_k\gets\prod_{i=b_{k-1}}^{b_k-1}w_i^{\rho_i}$
\EndParFor
\STATE $\zeta\gets\prod_{k=1}^{q}\zeta_k$
\hfill\COMMENT{$\lceil\log_2(q)\rceil$ parallel steps}
\STATEx\COMMENT{\underline{PHASE C:
    $u_\ell=\prod_{k=0}^{\ell}S_{\ell-k}^{r^k}$, for $\ell=0..(d-1)$}}
\STATE $u_0\gets{S_0}$;
\FOR{$k=1$ \TO $q-1$}
\hfill\COMMENT{$q$ parallel steps}
\STATE\label{lin:ubk} $u_{b_k}\gets
u_{b_{k-1}}^{\rho_{b_k-b_{k-1}}}\prod_{\ell=b_{k-1}+1}^{b_k}
S_\ell^{\rho_{b_k-\ell}}$;
\ENDFOR
\ParFor{$k=0..(q-1)$}\hfill\COMMENT{$q$ blocks of size $b$ or $b-1$ in parallel}
\FOR{$\ell=0$ \TO $b_{k+1}-b_{k}-1$}
\STATE\label{lin:ubl} $u_{b_k+\ell+1}\gets{S_{b_k+\ell+1}}\cdot{u_{b_k+\ell}^r}$;
\ENDFOR
\EndParFor
\STATEx\COMMENT{\underline{PHASE D: $\bar{\xi}=\prod_{i=1}^d \prod_{k=0}^{i-1}e(S_{i-1-k};\bar{H}_i)^{r^k}$}}
\STATE $\bar{\xi}=\Transpose{[1_{\GG_T},1_{\GG_T}]}\in\GG_T^2$;
\FOR{$j=1$ \TO $2$}
\ParFor{$k=1..q$}
\hfill\COMMENT{$q$ blocks of size $b$ or $b+1$ in parallel}
\STATE\label{lin:ehu} $\bar{\xi}_k[j]\gets\prod_{\ell=b_{k-1}}^{b_k-1}e(u_\ell;\bar{H}_{\ell-1}[j])$
\EndParFor
\STATE $\bar{\xi}[j]\gets\prod_{k=1}^{q}\bar{\xi}_k[j]$
\hfill\COMMENT{$\lceil\log_2(q)\rceil$ parallel steps}
\ENDFOR
\end{algorithmic}
\end{algorithm}

\begin{lemma}
\cref{alg:parserver} is correct, work-optimal with work $W_q=O(d)$ and runs
in time $W_q/q+o(W_q)$ on $q$ processors.
\end{lemma}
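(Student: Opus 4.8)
The plan is to analyze the four phases of \cref{alg:parserver} in turn, in each case establishing a loop invariant that identifies the intermediate values $\rho_\bullet,\zeta_\bullet,u_\bullet,\bar\xi_\bullet$ with the quantities computed by the sequential \cref{alg:veval}, and then reading off a work bound and a critical-path (depth) bound; the running time on $q$ processors then follows from Brent's scheduling principle (work $W$ and depth $D$ give time $O(W/q+D)$). For \textbf{Phase A} I would show by induction on the doubling round that afterwards $\rho_j=r^j\bmod p$ for all $j\le\min(2i,d)$, the inductive step being immediate because each new $\rho_{i+k}$ is $\rho_i\cdot\rho_k$ with $k\le\min(i,d-i)$ so that both factors are already available; there are $\lceil\log_2 d\rceil$ rounds, the one with counter $i$ doing $O(i)$ multiplications, hence total work $O(d)$ (geometric sum) and depth $O(\log d)$. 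For \textbf{Phase B} the point is that the offsets satisfy $b_q=bq+r=d+1$, so the intervals $[b_{k-1},b_k)$ partition $\{0,\dots,d\}$ and therefore $\prod_{k}\zeta_k=\prod_{i=0}^d w_i^{r^i}=\Transpose{W}\boxdot x$, the $\zeta$ of \cref{alg:veval}; each $\zeta_k$ is $O(d/q)$ group operations done in parallel (in the constant-cardinality model a scalar exponentiation is $O(1)$) and the binary-tree product adds $O(q)$ work and $O(\log q)$ depth.

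\textbf{Phase C} is the technical core. I would first prove the scalar Horner recurrence $u_{m+1}=S_{m+1}\cdot u_m^{\,r}$ for the prefix products $u_m=\prod_{k=0}^m S_{m-k}^{\,r^k}$ (these are the $t_{m+1}$ of \cref{alg:prefixdiff}), which validates the baby-step inner loop, and then the giant-step identity $u_{b_k}=u_{b_{k-1}}^{\rho_{b_k-b_{k-1}}}\cdot\prod_{i=b_{k-1}+1}^{b_k}S_i^{\rho_{b_k-i}}$, obtained by splitting the defining product of $u_{b_k}$ at exponent $b_k-b_{k-1}$: the high part re-indexes to $u_{b_{k-1}}$ raised to $r^{\,b_k-b_{k-1}}$ and the low part is the displayed finite product. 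With these in hand, Phase C is correct, and its cost is: the giant steps form a length-$(q-1)$ chain, each link having $O(d/q)$ work and internal binary-tree depth $O(\log(d/q))$, while the baby steps fill the $q$ block interiors in parallel, each an $O(d/q)$-length sequential chain; hence Phase C has work $O(d)$ and depth $O(d/q+q\log(d/q))$. \textbf{Phase D} mirrors Phase B: by the same partition argument together with bilinearity of $e$, $\prod_k\bar\xi_k[j]$ is the $\bar\xi[j]$ of \cref{alg:veval} (after the obvious re-indexing of $\bar H$), costing $O(d/q)$ work per block, $O(\log q)$ depth to combine, and twice for $j\in\{1,2\}$.

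Summing the phases gives total work $W_q(d)=O(d)$, which matches the $\Theta(d)$ cost of the sequential \cref{alg:veval} together with the dot product, so the algorithm is work-optimal; the total depth is $O(\log d+\log q+d/q+q\log(d/q))$, and Brent's principle then yields running time $O(d/q+q\log(d/q)+\log d)=W_q/q+O(q\log(d/q)+\log d)$ on $q$ processors. The only term not literally absorbed into $W_q/q$ is the sequential chain of giant steps, contributing $q\log(d/q)$, but this is $o(d)=o(W_q)$ whenever $q=o(d/\log d)$ — in particular for any fixed number of threads, which is the regime of our experiments — giving the claimed $W_q/q+o(W_q)$. I expect this giant-step chain to be the main obstacle in the write-up: it is the one genuinely sequential data dependency in the algorithm, and the entire design of the baby-step/giant-step split is precisely to shorten it from $O(d)$ to $O(q\log(d/q))$ so that it vanishes into the $o(W_q)$ term; making that trade-off quantitative, and noting the mild restriction on $q$ it entails, is the delicate point.
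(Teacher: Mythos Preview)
Your proposal is correct and follows essentially the same phase-by-phase decomposition as the paper, with the same work and depth bounds and the same appeal to Brent's scheduling principle. The one substantive difference is in how work-optimality is argued: you justify it by matching the $\Theta(d)$ sequential cost of \cref{alg:veval}, whereas the paper additionally invokes the lower bound of \cite[Corollary~4]{Snir:1986:prefix} to show that even the roughly $2d$ operations needed in Phase C (the only phase with extra work beyond the sequential count) are optimal among parallel prefix algorithms at that depth --- a slightly sharper statement than you need for the lemma as written. Your explicit mention of the restriction $q=o(d/\log d)$ for the $o(W_q)$ term to absorb the giant-step chain is a welcome clarification that the paper leaves implicit.
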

\begin{proof}
Correctness of phases A, B and D stems directly from the correctness
of~\cref{alg:veval}. Phase C is correct since the new variables $u_\ell$
satisfy $\{u_0=S_0, u_{\ell+1}=S_{\ell+1}u_\ell^r\}$.

Then, $p$ is the prime group order, and for any homomorphic system
satisfying~\cref{eq:homo:addmul} we have:
\begin{itemize}[leftmargin=2\labelsep]
\item Phase A: requires $d$ multiplications modulo $p$ with depth
  $O(\log(d))$ and the parallelism is thus only bounded by Brent's law~\cite[Lemma~2]{Brent:1974:law};
\item Phase B: requires $d+1$ homomorphic exponentiations and $d$ homomorphic multiplications with a depth of $b=d/q$ such operations and the
  parallelism is thus only bounded by Brent's law;
\item Phase C: requires $d$ exponentiations and multiplications
  in $\GG_1$. But this is implemented in parallel with a depth of
  $b=d/q$ such operations, only after precomputing $q-1$ times $b$
  operations each with a depth of $\log(b)$;
\item Phase D: requires $d$ pairings and $d-1$ multiplications in
  $\GG_T$ with a depth of $b=d/q$ such operations and the
  parallelism is thus only bounded by Brent's law.
\end{itemize}

So only Phase C requires more operations in parallel than in sequence.
And that number of operations is $d+b(q-1)$ exponentiations and
multiplications if ran on $q$ processors.
This latter work is in fact optimal for prefix-like computations as
shown in \cite[Corollary~4]{Snir:1986:prefix} (see
also~\cite{Roch:2006:prefix}): indeed consider a family of binary
gates $\theta_{\rho_i}(a,b)$ that on inputs $a$ and $b$ compute
$a\cdot{b^{\rho_i}}$, that is one multiplication and one exponentiation. They
satisfy the conditions of \cite[Corollary~4]{Snir:1986:prefix} and
thus computing all the $u_\ell$ is lower bounded by $d(2-1/q)$ calls to
that gate when ran on $q$ processors.
\end{proof}

\begin{remark} The accumulated \emph{independent}
  exponentiations/pairings of lines \ref{lin:zetak}, \ref{lin:ubk} and
  \ref{lin:ehu} of~\cref{alg:parserver} can in fact be gathered in
  small batches, where each batch can factorize some computations
  (e.g. using a generalized Shamir trick with multiple exponentiations
  in $\GG_1$, or using NAF windows, etc.). Therefore, on the one hand,
  with respect to a purely sequential computation, the extra work
  required by Phase C (when used with more than $2$ processors) is in
  fact batched.
  On the other hand, the other part of Phase C cannot benefit from
  these batches and is therefore dominant, but is more parallel.
  Therefore, as shown also in~\cref{tab:parallel}, this allows us to
  reach, on multiple cores, pretty good overall practical speed-ups.
\end{remark}

\begin{table}[!ht]\centering\caption{Parallel Server-side VESPo}\label{tab:parallel}
\begin{tabular}{crrrrrr}
\toprule
Degree & 5816 & 18390 &  58154  & 186093 & 426519 & 4026778 \\
\midrule
1 core& $4.4$s & $13.5$s & $42.6$s & $141.7$s & $324.1$s & $3\,064.8$s  \\
4 cores & $1.2$s & $3.8$s & $11.8$s &   $38.3$s &  $87.8$s &  $831.6$s \\
8 cores & $0.7$s & $2.0$s &   $6.3$s &  $19.9$s &  $45.4$s &  $428.9$s \\
12 cores &$0.5$s &  $1.4$s &  $4.2$s &  $13.4$s &  $30.6$s &  $283.6$s \\
\bottomrule
\end{tabular}
\end{table}

This parallelism can be used to further reduce the Server latency
for large databases, to allow faster multi-user queries, and thus to make
the scheme even more practically relevant.

\section{Post-quantum homomorphic routines}\label{app:pqc}
The use of linearly homomorphic encryption (e.g., Paillier) and pairings
means that, as implemented, our protocols are not resistant to quantum
attacks. In response to recent recommendations by NIST and other
standards organizations that all cryptographic solutions be made
quantum-resistant, we considered the impacts of replacing these
primitives with fully homomorphic encryption (FHE) primitives which are
believed to be quantum-resistant.

Unfortunately, there are two reasons why further work is needed before
we could recommend using FHE in our protocols. First, as we discuss in
detail below, our preliminary implementation results are prohibitively
slow using state of the art FHE libraries, due apparently to the
inherent non-linear nature of polynomial evaluation on encrypted
evaluation points. Second, our proof of security as written reduces the
soundness guarantee to the $t$-BSDH problem, which has no analogue in
FHE cryptosystems, and it is not clear what different assumption on FHE
primitives could be used in its place.

We now detail our preliminary investigation into using FHE in our
protocols, to better explain the shortcomings mentioned above and
hopefully encourage future work in this direction.

We need two systems.
First, where we use Paillier's cryptosystem, our protocols were
already abstracted by the requirements of~\cref{eq:homo:addmul}.
It is thus possible to use instead any quantum-safe linearly
homomorphic primitives.
There, Paillier's routine with larger parameters might be a possibility, see
e.g.~\cite{Bernstein:20217:PQRSA}. Other possibilities for now is to
use quantum-safe fully homomorphic encryption, like
BGV~\cite{Brakerski:2014:BGV}, here without bootstrapping.

\newcommand{\pE}{\ensuremath{\mathcal{E}}}
\newcommand{\pD}{\ensuremath{\mathcal{D}}}

Second, we need to modify the parts where we use pairings, in order to
replace them with quantum-safe routines.
For this we first abstract the requirements.
Denote by \pE, and resp. \pD, the homomorphic
encryption, resp. decryption, functions.
We want those to support
homomorphic addition,
homomorphic multiplication between a ciphered message and a cleartext,
together with depth-1 homomorphic multiplication between two
ciphertexts and with homomorphic equality testing (in a private
setting, this latter requirements can also for instance be implemented
by decryption and direct equality testing).
We can notate these requirements as follows:
\begin{eqnarray*}
  \pD(\pE(m_1)\oplus\pE(m_2))	&=& m_1 + m_2 \\
  \pD(\pE(m_1)^{m_2})	&=&  m_1 \times m_2	\\
  \pD(\pE(m_1)\otimes\pE(m_2))	&=& m_1 \times m_2\\
  \pE(m_1)\checks{\oeq}\pE(m_2)	&\iff&	  m_1\checks{=}m_2
\end{eqnarray*}

The pairings parts
in~\cref{proto:full} are now transformed as in~\cref{tab:abspq}:
\begin{table}[htb]\caption{Abstraction of the pairings
    functionalities}\label{tab:abspq}
\begin{tabular}{cc}
\toprule
{\Setup} & $\bar{\mathcal{K}}\gets\pE(\bar{P}(s))$,
$S\gets[\pE(s^k)]_{k=0}^{d-1}$,\\
 & $\bar{H}\gets[\pE({p_i}\vect{\alpha}+\matr{\Phi}^i\vect{\beta}]_{i=1}^{d}$\\
\midrule
{\Update} & $\bar{\mathcal{K}}\gets \pE\left(\alpha(p'_i-p_i)s^i\right)\oplus\bar{\mathcal{K}}$\\
{\Eval} & $\bar{\xi}\gets\oplus_{i=1}^d\left(\oplus_{k=0}^{i-1}S_{i-k-1}^{x_k}\right)\otimes\bar{H}_i$\\
{\Verif} & $\bar{\xi}^{s-r}\oplus\pE(D(\zeta)\vect{\alpha}+\vect{c})\checks{\oeq}\bar{\mathcal{K}}$\\
\bottomrule
\end{tabular}
\end{table}

The two important issues are then the security analysis and the
performance.
First, the security analysis we have performed depends on assumptions
of pairings (namely the hardness of $t$-BSDH).
For \cref{tab:abspq} we would need to use some other assumption on
the chosen FHE primitives.
Second, our protocol efficiency crucially depends on efficient
ciphertext-cleartext multiplication.
We here report on some attempts with the BGV system implemented with
the SEAL~\cite{Microsoft:2022:seal4.0} and the
HElib~\cite{Halevi:2021:HElib} libraries.

We were able to make our protocol work with these LWE-like
implementations but for now there is a prohibitive performance price to
pay, for two reasons:
\begin{enumerate}
\item A first constraint in SEAL and HElib is the size of the
  cleartext modulus which can usually not yet be very large, in
  practice at most some small fraction of a machine word.
\item A second limitation for these libraries, is that the
  ciphertext-cleartext multiplication is not much more efficient
  than ciphertext-ciphertext, since the noise in the polynomials is
  similarly increasing in both cases.
\end{enumerate}

More precisely, for the computation of our coefficient $\zeta$, we were
able to use batched arithmetic with both SEAL and HElib and this is
quite efficient, but works only for very small primes.

Differently, this is for the computation of our second
coefficient,~$\xi$, that the price to pay is much too prohibitive,
even for very
small primes and (too) low security parameters.
Indeed, to compute $\xi$, our Server scheme involves computations of
the form $S_3\oplus{S_2^r}\oplus{S_1^{r^2}}\oplus{S_0^{r^3}}$, where
$r$ is a cleartext and the $S_i$ are ciphertexts.
On the one hand, if the $r^k$ are precomputed, this is of
constant multiplicative depth $1$, even when counting
ciphertext-cleartext multiplications, but then the overall double-loop
scheme of~\cref{prop:bivariate} is quadratic-time.
On the other hand, if $\xi$ is instead homomorphically computed with the
linear prefix-like~\cref{alg:prefixdiff}, the computations now involve
in fact computations of the form
$S_3\oplus(S_2\oplus(S_1\oplus(S_0^r))^r)^r$. As mentioned, even
though the involved multiplications are only ciphertext-cleartext,
in the available libraries the noises increase linearly, much
closer to a linear multiplicative depth. Bootstrapping is thus
required a linear number of times.
For instance, the latency of a BGV bootstrapping operation
costs at least several dozen
seconds~\cite{Halevi:2021:HElib}\footnote{In contrast, some other
  libraries, such as FHEW~\cite{Ducas:2015:FHEW} and
  TFHE~\cite{Chillotti:2021:TFHE}, may have faster bootstrapping
  operations but require to re-implement the homomorphic arithmetic
  with boolean circuits.}.
We provide in~\cref{tab:pqcbench}, evaluations of our
scheme using either SEAL and the quadratic, depth-$1$ version, or
HElib and the linear, but bootstrapped version.
Comparing with~\cref{tab:lintests}, we see that quantum-safe routines
are for now still several orders of magnitude slower.

\newcommand{\lrcmidruleTF}{\cmidrule(lr){3-5}} %
\begin{table}[htbp]\caption{Post-quantum prototypes
{\footnotesize(SEAL modulo $1032193$, with $123.1$ eq. security, $4096$-batched $\zeta$,
and quadratic-time $\xi$;
HElib modulo $31$, with $39.5$ eq. security, $24$-batched $\zeta$,
and linear-time but bootstrapped $\xi$)}.}\label{tab:pqcbench}
\begin{tabular}{ccrrrr}
\toprule
& \multirow{2}{*}{Deg.} & \multicolumn{3}{c}{Server} & \multirow{2}{*}{Client}
\\
\lrcmidruleTF
& & $\zeta$ & $\xi$ & bootstrap. &
\\
\midrule
\multirow{8}{*}{\rotatebox[origin=c]{90}{SEAL}}
         & 32 & $<$0.01s & 2.03s & 0 &  \multirow{8}{*}{6.57ms} \\
         & 64 & $<$0.01s & 5.45s & 0 &  \\
         & 128 & $<$0.01s & 20.95s & 0 &  \\
         & 256 & $<$0.01s & 82.14s & 0 &  \\
         & 512 & $<$0.01s & 325.87s & 0 &  \\
         & 1024 & $<$0.01s & 1\,294.88s & 0 &  \\
         & 2048 & $<$0.01s & 5\,171.84s & 0 & \\
         & 4096 & $<$0.01s & 20\,667.99s & 0 &  \\
\midrule
\multirow{8}{*}{\rotatebox[origin=c]{90}{HElib}}
         & 32 & 0.01s & 7.26s & 0 & \multirow{8}{*}{283.44ms} \\
         & 64 & 0.01s & 80.02s & 13 &  \\
         & 128 & 0.02s & 257.20s & 45 &  \\
         & 256 & 0.03s & 613.83s & 109 &  \\
         & 512 & 0.05s & 1\,334.52s & 238 &  \\
         & 1024 & 0.10s & 2\,765.61s & 493 &  \\
         & 2048 & 0.20s & 5\,643.51s & 1005 &  \\
         & 4096 & 0.39s & 11\,382.50s & 2030 & \\
\bottomrule
\end{tabular}
\end{table}

The dominant cost in these experiments is in fact the bootstrapping.
Future work thus might be:
\begin{itemize}
\item Designing a post-quantum linearly homomorphic encryption
  with efficient ciphertext-cleartext multiplication
\item Transforming the computation of $\xi$ so that it is more
  batchable (a strategy could be to start by adapting the
  parallelization presented in~\cref{app:parprefix}, so that more
  identical operations could be performed simultaneously)
\end{itemize}

For instance, phase C in~\cref{alg:parserver} is solely responsible
for the multiplicative depth. Then we see that line~\ref{lin:ubk}
can be performed with a depth of $q$, while line~\ref{lin:ubl} can be
performed with $q$ depth\nobreakdash-$b$ operations, with $d=bq$.
With some FHE implementations (as reflected in~\cref{tab:pqcbench})
the first aggregated multiplications require less bootstrapping.
Thus, depending on their respective costs and the actual
architecture, some choices of $b$ (and $q$) might reduce the overall
required bootstraps.
By looking at~\cref{tab:pqcbench,tab:lintests}, we see that even such
a (small) constant gain in bootstrapping is not yet sufficient to compete
with the pairings.

\end{document}